\theoremstyle{plain}
\newtheorem{theorem}{Theorem}[section]
\newtheorem{lemma}[theorem]{Lemma}
\theoremstyle{definition}
\theoremstyle{remark}
\def\eqref#1{equation~\ref{#1}}
\def\1{\bm{1}}
\def\rmU{{\mathbf{U}}}
\def\vmu{{\bm{\mu}}}
\def\ve{{\bm{e}}}
\def\vh{{\bm{h}}}
\def\vm{{\bm{m}}}
\def\vr{{\bm{r}}}
\def\vt{{\bm{t}}}
\def\vu{{\bm{u}}}
\def\vx{{\bm{x}}}
\def\vz{{\bm{z}}}
\def\mB{{\bm{B}}}
\def\mI{{\bm{I}}}
\def\mL{{\bm{L}}}
\def\mQ{{\bm{Q}}}
\def\mR{{\bm{R}}}
\def\mX{{\bm{X}}}
\DeclareMathAlphabet{\mathsfit}{\encodingdefault}{\sfdefault}{m}{sl}
\SetMathAlphabet{\mathsfit}{bold}{\encodingdefault}{\sfdefault}{bx}{n}
\def\gA{{\mathcal{A}}}
\def\gB{{\mathcal{B}}}
\def\gD{{\mathcal{D}}}
\def\gE{{\mathcal{E}}}
\def\gG{{\mathcal{G}}}
\def\gL{{\mathcal{L}}}
\def\gN{{\mathcal{N}}}
\def\gP{{\mathcal{P}}}
\def\gS{{\mathcal{S}}}
\newcommand{\E}{\mathbb{E}}
\newcommand{\R}{\mathbb{R}}
\newcommand{\KL}{D_{\mathrm{KL}}}
\newcommand{\Cov}{\mathrm{Cov}}
\definecolor{codegreen}{rgb}{0,0.6,0}
\definecolor{codegray}{rgb}{0.5,0.5,0.5}
\definecolor{codepurple}{rgb}{0.58,0,0.82}
\definecolor{backcolour}{rgb}{0.95,0.95,0.92}
\lstdefinestyle{mystyle}{
    backgroundcolor=\color{backcolour},   
    commentstyle=\color{codegreen},
    keywordstyle=\color{magenta},
    numberstyle=\tiny\color{codegray},
    stringstyle=\color{codepurple},
    basicstyle=\ttfamily\scriptsize,
    breakatwhitespace=false,         
    breaklines=true,                 
    captionpos=b,                    
    keepspaces=true,                 
    showspaces=false,                
    showstringspaces=false,
    showtabs=false,                  
    tabsize=2
}
\def\vepsilon{{\bm{\epsilon}}}
\newcommand{\model}{PepGLAD}
\title{Full-Atom Peptide Design \\
with Geometric Latent Diffusion}
\author{%
  Xiangzhe Kong$^{1,2}$~~Yinjun Jia$^{3}$~~Wenbing Huang$^{4 \ast}$~~Yang Liu$^{1,2,5}$\thanks{Correspondence to Wenbing Huang <hwenbing@126.com>, Yang Liu <liuyang2011@tsinghua.edu.cn>}\\
  $^1$Dept. of Comp. Sci. \& Tech., Tsinghua University \\
  $^2$Institute for AIR, Tsinghua University ~$^3$School of Life Sciences, Tsinghua University \\
  $^4$Gaoling School of Artificial Intelligence, Renmin University of China \\
  $^5$Shanghai Artificial Intelligence Laboratory, Shanghai, China \\
}
\begin{document}

\maketitle

\begin{abstract}
  Peptide design plays a pivotal role in therapeutics, allowing brand new possibility to leverage target binding sites that are previously undruggable. Most existing methods are either inefficient or only concerned with the target-agnostic design of 1D sequences. 
  In this paper, we propose a generative model for full-atom \textbf{Pep}tide design with \textbf{G}eometric \textbf{LA}tent \textbf{D}iffusion (PepGLAD) given the binding site. We first establish a benchmark consisting of both 1D sequences and 3D structures from Protein Data Bank (PDB) and literature for systematic evaluation. 
  We then identify two major challenges of leveraging current diffusion-based models for peptide design: the full-atom geometry and the variable binding geometry. To tackle the first challenge, PepGLAD derives a variational autoencoder that first encodes full-atom residues of variable size into fixed-dimensional latent representations, and then decodes back to the residue space after conducting the diffusion process in the latent space. For the second issue, PepGLAD explores a receptor-specific affine transformation to convert the 3D coordinates into a shared standard space, enabling better generalization ability across different binding shapes.
  Experimental Results show that our method not only improves diversity and binding affinity significantly in the task of sequence-structure co-design, but also excels at recovering reference structures for binding conformation generation. 
\end{abstract}

\section{Introduction}
\label{sec:intro}

Peptides are short chains of amino acids and acts as vital mediators of many protein-protein interactions in human cells. Designing functional peptides has attracted increasing attention in biological research and therapeutics, since the highly-flexible conformation space of peptides allows brand new possibility to target binding sites previously undruggable with antibodies or small molecules~\citep{fosgerau2015peptide, lee2019comprehensive}.
The key of peptide design is to generate peptides that interact compactly with target proteins (see Figure~\ref{fig:intro}), since they mostly exhibit flexible conformations~\citep{grathwohl1976x} unless bound to these receptors~\citep{vanhee2011computational}.

Conventional simulation or searching algorithms rely on frequent calculations of physical energy functions~\citep{bhardwaj2016accurate, cao2022design}, which are inefficient and prone to poor local optimum.
Recent advances illuminate the remarkable success of exploiting geometric deep generative models, particularly the equivariant diffusion models~\citep{martinkus2023abdiffuser, yim2023se}, for molecule design~\citep{guan20223d, lin2024functional}, antibody design~\citep{jin2021iterative, luo2022antigen, kong2023end} and protein design~\citep{watson2023novo, ingraham2023illuminating}, as well as latent diffusion models further enhancing the performance~\citep{xu2023geometric, fu2024latent}. Inspired by these successes, a natural idea is leveraging diffusion models for peptide design as well, which, yet, is challenging in two aspects.
From the dataset aspect, existing databases (PepBDB~\citep{wen2019pepbdb}, Propedia~\citep{martins2021propedia}) merely collect data from Protein Data Bank (PDB)~\citep{berman2000protein}, neither performing filters according to practical relevance~\citep{muttenthaler2021trends} and redundancy, nor providing an adequate split for evaluation.
Therefore, this paper first curates a benchmark from PDB~\citep{berman2000protein} and the literature~\citep{tsaban2022harnessing}, and then systematically evaluates the generative models in terms of diversity, consistency, and binding affinity. 

\begin{figure}[t!]
    \centering
    \includegraphics[width=.9\textwidth]{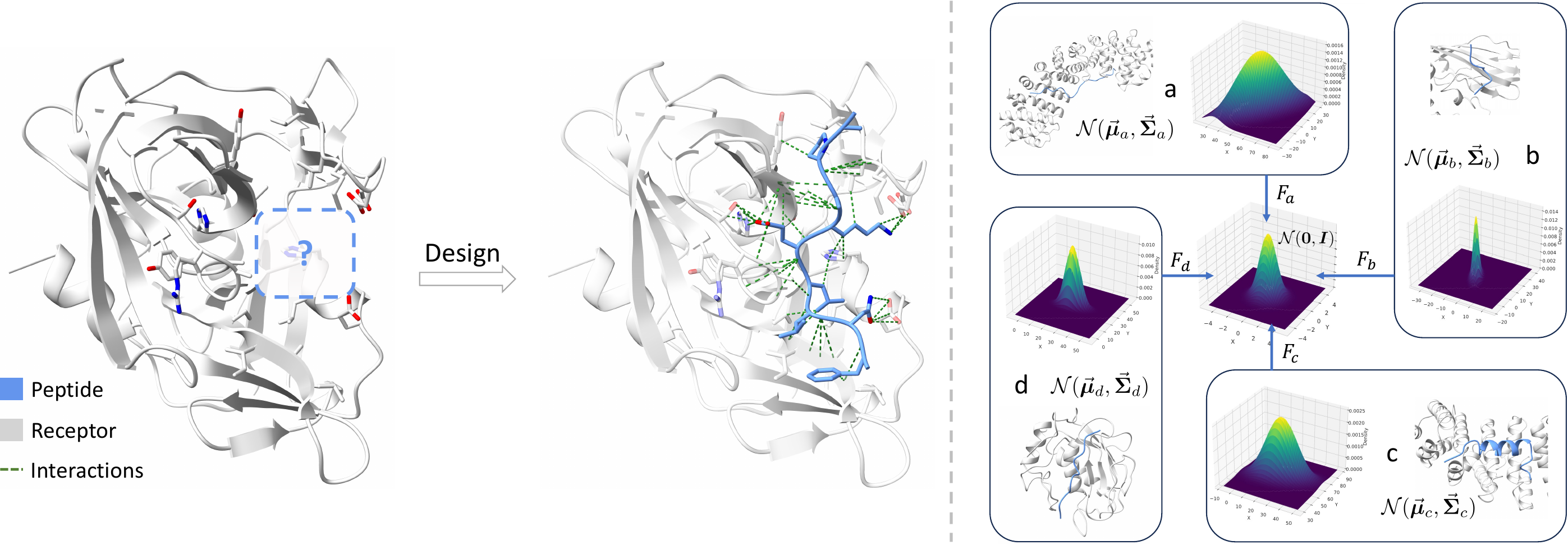}
    \caption{\textbf{Left}: Peptide design requires generating peptides that form compact interactions with the binding site on the receptor. The intricacy of protein-peptide interactions demands efficient exploration in the vast space for sequence-structure co-design. \textbf{Right}: Different binding sites (a, b, c, d) adopt disparate center offsets and geometric shapes, approximating variable 3D Gaussian distributions that deviate from $\gN(\mathbf{0}, \mI)$. We propose to convert the geometry into a standard space approximating standard Gaussian, via an affine transformation derived from the binding site (\textsection\ref{sec:affine}).}
    \label{fig:intro}
\end{figure}

From the methodology aspect, it is nontrivial to adopt latent diffusion models to characterize the geometry of protein-peptide interactions. The first nontriviality stems from the \emph{full-atom geometry}, which determines the comprehensive protein-peptide interactions in the atomic level, yet difficult to preserve. Throughout the generation process, the type of each amino acid always changes and thus requires us to generate different number of atoms, which is unfriendly to diffusion models that prefer fixed-size generation. Current latent diffusion models on molecules~\citep{xu2023geometric} or protein backbones~\citep{fu2024latent} still tackle tasks with fixed number of atoms, thus leaving this challenge untouched.
The second nontriviality lies in the \emph{variable binding geometry}. Diffusion models are typically implemented directly in the data space, which might be suitable for regular data (\emph{e.g.} images with fixed value range), yet ill-suited for our case on 3D coordinates where the value range is not fixed and even cursed with high variances due to the rich diversity in protein-peptide interactions. These variances define divergent target distributions of Gaussian with disparate expectation and covariance, which hinders the transferability of the diffusion process across different binding sites and thereby yields unsatisfactory generalization capability. Unfortunately, this point is seldom investigated previously.

To address the above problems, we propose a powerful model for full-atom \textbf{Pep}tide design with \textbf{G}eometric \textbf{LA}tent \textbf{D}iffusion (\model) with the following contributions:
\begin{itemize}
    \item We construct a new benchmark from PDB and literature based on practical relevance and non-redundancy, then systematically evaluate available sequence-structure co-design models on the task of target-specific peptide design.
    \item To capture the \emph{full-atom geometry}, we first learn a Variational AutoEncoder (VAE) to obtain a fixed-size latent representation (including a 3D coordinate and a hidden feature) for each residue of the input peptide, and then conduct the diffusion process in this latent space, both of which are conditioned on the binding site to better model protein-peptide interactions. Notably, the proposed design enables our model to accommodate full-atom input and output.
    \item Regarding the \emph{variable binding geometry}, we derive a shared standard space from the binding sites by proposing a novel skill---receptor-specific affine transformation. Such affine transformation is computed by the center offset and covariance Cholesky decomposition of the binding site coordinates, serving as a mapping from the binding site distribution to standard Gaussian. With the affine transformation applied to both the binding sites and the peptides, we are able to project the shape of all complexes into approximately standard Gaussian distribution, which facilitates generalization to diverse binding sites.
    \vskip -0.1in
\end{itemize}

Favorably, all the aforementioned models and processes meet the desired symmetry, \emph{i.e.}, E(3)-equivariance, as proved by us. Experiments on sequence-structure co-design and complex conformation generation demonstrate the superiority of \model~over the existing generative models.

\section{Related Work}
\label{sec:related}

\textbf{Peptide Design}~
Conventional methods directly sample residues~\citep{bhardwaj2016accurate} or building blocks from libraries containing small fragments of proteins~\citep{hosseinzadeh2021anchor, swanson2022tertiary, cao2022design, bryant2022evobind}, with guidance from delicate physical energy functions~\citep{alford2017rosetta}. These methods are time-consuming and easy to be trapped by local optimum. Recent advances with deep generative models mainly focuses on target-agnostic 1D language models~\citep{muller2018recurrent}, antimicrobial peptides~\citep{das2021accelerated, wang2022accelerating}, or a subtype of peptides with $\alpha$-helix~\citep{xie2023helixgan, xie2024helixdiff}. While geometric deep generative models are exhibiting notable potential in other domains of target-specific binder design (\emph{e.g.} antibodies), their capability of target-specific peptide design remains unclear, which is the first problem we answer in this paper. Other contemporary work includes peptide design algorithms with flow matching frameworks~\citep{lifull, linppflow}.

\textbf{Geometric Protein/Antibody Design}~
Protein design primarily aims to generate stable secondary or tertiary structures~\citep{ingraham2019generative}, where diffusion models demonstrate inspiring performance~\citep{wu2024protein, trippe2022diffusion, anand2022protein, yim2023se}. In particular, RFDiffusion~\citep{watson2023novo} first generates backbones via diffusion, and then designs the sequences through cycles of inverse folding and structure refining with empirical force fields. Chroma~\citep{ingraham2023illuminating} adopts a similar strategy, but further explores controllable generative process with custom energy functions. 
Antibody design, encompassing a special family of proteins in the immune system to capture antigens, mainly focuses on inpainting complementarity-determining regions (CDRs) at the interface between the antigen and the framework~\citep{kong2022conditional, kong2023end, verma2023abode}, where the geometric diffusion models exhibit promising potential~\citep{luo2022antigen, martinkus2023abdiffuser} in co-designing sequence and structure. Unlike antibodies which are constrained by framework regions, peptides exhibit a more irregular binding pattern and greater flexibility, adapting to binding sites upon interaction~\citep{lee2019comprehensive}. Thus the target distributions are remarkably divergent on different binding sites, posing an urgent need for more robust generative modeling.

\textbf{Geometric Latent Diffusion Models}~
Diffusion models learn a denoising trajectory to generate desired data distribution from a prior distribution, commonly standard Gaussian~\citep{sohl2015deep, song2019generative, ho2020denoising}. Recent literature extends diffusion to 3D small molecules satisfying the E(3)-equivariance~\citep{xu2022geodiff}, which triggers subsequent advances in geometric design of macro molecules (\emph{e.g.} antibody, protein) as aforementioned.
Further efforts are made to latent diffusion models~\citep{rombach2022high, xu2023geometric, fu2024latent}, which implement the generative process in the compressed latent space of pretrained auto-encoders, to improve the performance. Compared to the literature that either encodes atom-wise representation in the latent space for small molecule generation~\citep{xu2023geometric}, or compress a fixed number of atoms into one latent node for protein backbone generation~\citep{fu2024latent}, we explore compression of the full-atom geometry by directly generating different residues with variable number of atoms in the latent space. Moreover, we propose a novel technique, namely the data-specific affine transformations, to enhance the generalization ability of diffusion models, which is barely explored before.

\section{Our Method: \model}
\label{sec:method}

We first define the notations in the paper and formalize peptide design in \textsection\ref{sec:def}. The overall workflow of our \model~is presented in Figure~\ref{fig:model}, which consists of three modules: (1) An autoencoder that defines the joint latent space for sequences and structures conditioned on the full-atom context of the binding site (\textsection\ref{sec:autoencoder}); (2) An affine transformation derived from the binding site to project the 3D geometry into a standard space approximating standard Gaussian distribution (\textsection\ref{sec:affine}); (3) A latent diffusion model trained on the standard latent space (\textsection\ref{sec:diffusion}). Finally, we summarize the training and the sampling procedures in \textsection\ref{sec:alg}.

\subsection{Definitions and Notations}
\label{sec:def}
We represent binding sites and peptides as geometric graphs $\gG = \{(x_i, \vec{\mX}_i)\}$, where each node $i$ is a residue with its amino acid type $x_i$ and the coordinates of all its $c_i$ atoms $\vec{\mX}_i \in \R^{c_i \times 3}$. In later sections, we use the simplified notations $i \in \gG$ to denote that a node $i$ is in the geometric graph $\gG$, and $|\gG|$ to denote the total number of nodes in $\gG$. We use $\gG_p$ and $\gG_b$ to represent the geometric graph of the peptide and the binding site, respectively. In this work, the binding site incorporates residues on the target protein within 10\AA~distances to the peptide residues based on $C_\beta$ atoms which alleviates leakage of the side-chain interactions. Note that the threshold (10\AA) is chosen to be large to better reduce leakage of the peptide geometry.

\textbf{Task Definition}~~Given the binding site $\gG_b$, we aim to obtain a generative model $p_\theta$ conforming to the distribution of binding peptides $q(\gG_p | \gG_b)$.

\begin{figure}[!t]
    \centering
    \includegraphics[width=.95\textwidth]{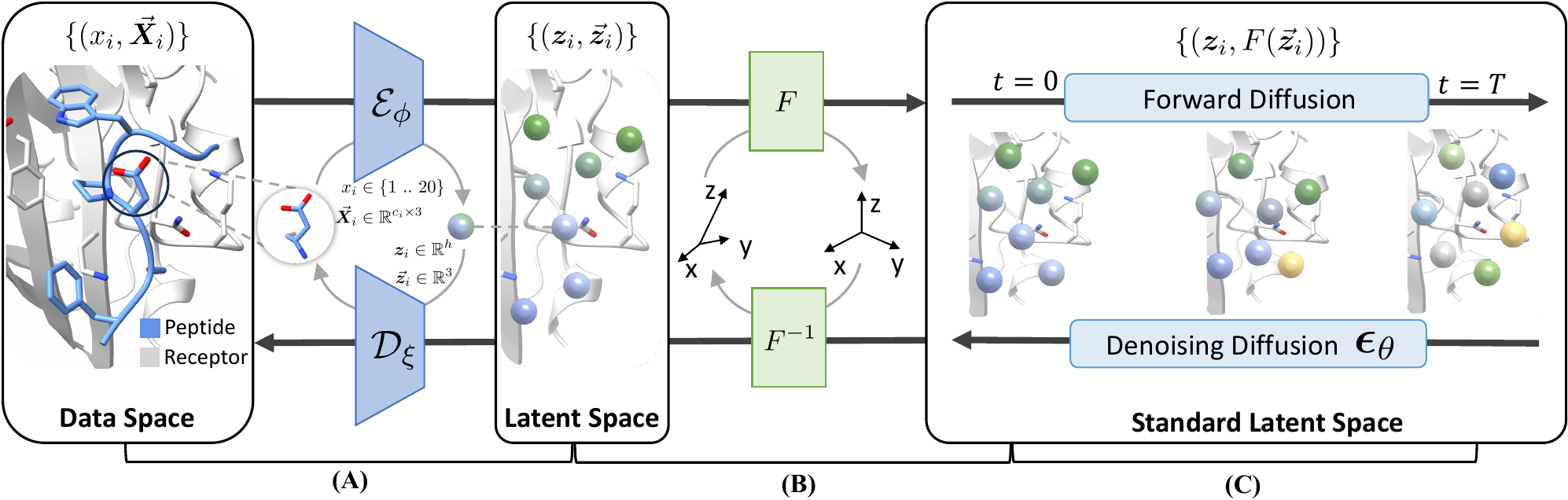}
    \caption{Overall architecture of \model. \textbf{(A)} Variational AutoEncoder (\textsection\ref{sec:autoencoder}): compressing the sequence and the structure $\{(x_i, \vec{\mX}_i)\}$ of the peptide into the latent space $\{(\vz_i, \vec{\vz}_i)\}$ with the encoder $\gE_\phi$, and decoding the sequence and full-atom geometry from the latent states with the decoder $\gD_\xi$. \textbf{(B)} Affine Transformation $F$ (\textsection\ref{sec:affine}): projecting the geometry to approximately $\gN(\bm{0}, \mI)$ via the receptor-specific affine transformation derived from the binding site, and recovering the data geometry with the inverse of $F$ after the diffusion generative process. \textbf{(C)} Latent Diffusion (\textsection\ref{sec:diffusion}): jointly generating $\vz_i$ and $\vec{\vz}_i$ in the standard latent space.} 
    \label{fig:model}
\end{figure}

\subsection{Variational AutoEncoder}
\label{sec:autoencoder}
The autoencoder~\citep{vincent2010stacked} consists of an encoder $\gE_{\phi}$ that encodes the peptide $\gG_p$ in the presence of the binding site $\gG_b$ into a latent state $\gG_z$, and a decoder $\gD_{\xi}$ that reconstructs the peptide from the latent state to obtain $\gG_p' = \{(x_i', \vec{\mX}_i')\}$. To encourage $\gE_\phi$ to learn contextual representations of residues, we corrupt 25\% of the residues in $\gG_p$ with a $[\mathrm{MASK}]$ type to obtain $\Tilde{\gG}_p$ as the input:
\begin{align}
     \gG_z = \gE_\phi(\Tilde{\gG_p}, \gG_b),
     \qquad
     \gG_p' = \gD_\xi(\gG_z, \gG_b),
\end{align}
where $\gG_z = \{(\vz_i, \vec{\vz}_i) | i \in \gG_p \}$ contains the latent states $\vz_i \in \R^{h}$ ($h=8$ in this paper) and $\vec{\vz}_i \in \R^3$ sampled from the encoded distribution $\gN(\vz_i;\vmu_i, \bm{\sigma}_i)$ and $\gN(\vec{\vz}_i;\vec{\vmu}_i, \vec{\bm{\sigma}}_i)$ using the reparameterization trick~\citep{kingma2013auto}. We borrow the adaptive multi-channel equivariant encoder in dyMEAN~\citep{kong2023end} for both $\gE_\phi$ and $\gD_\xi$ to capture the full-atom geometry. In the decoder $\gD_\xi$, we factorize the joint distribution of sequences and structures as follows:
\begin{align}
    p_\xi(x_i', \vec{\mX}_i' | \gG_z, \gG_b) = p_{\xi_1}(x_i' | \gG_z, \gG_b) p_{\xi_2}(\vec{\mX}_i' | x_i', \gG_z, \gG_b),
\end{align}
where the sequence is first decoded and then the all-atom geometry, initialized with replications of $\vec{\vz}_i$, is reconstructed. The training objective of the autoencoder consists of the reconstruction loss $\gL_\textit{recon}$ and the KL divergence $\gL_\textit{KL}$ to constrain the latent space. The reconstruction loss includes cross entropy on the residue types, mean square error (MSE) on the full-atom structures, and an auxilary loss $\gL_\textit{aux}$ on bond lengths and angles~\citep{jumper2021highly}:
\begin{align}
    \gL_\textit{recon}(i) = H(p(x_i), p(x_i')) + \operatorname{MSE}(\vec{\mX}_i, \vec{\mX}_i') + \gL_\textit{aux}(i),
\end{align}
where $H$ denotes cross entropy. We include details of $\gL_\textit{aux}$ in Appendix~\ref{app:aux}. The KL divergence constrains $\vz_i$ and $\vec{\vz}_i$ with the prior $\gN(\mathbf{0}, \mI)$ and $\gN(\vec{\vr}_i, \mI)$, respectively, where $\vec{\vr}_i$ denotes the coordinate of the alpha carbon ($\texttt{C}_\alpha$) in node $i$:
\begin{equation}
\begin{aligned}
    \gL_\textit{KL}(i) = \lambda_1 \cdot \KL(\gN(\mathbf{0}, \mI) \Vert \gN(\bm{\mu}_i, \operatorname{diag}(\bm{\sigma}_i))) + \lambda_2 \cdot \KL(\gN(\vec{\vr}_i, \mI)\Vert \gN(\vec{\bm{\mu}}_i, \operatorname{diag}(\vec{\bm{\sigma}}_i))),
\end{aligned}
\end{equation}
where $\KL$ denotes the KL divergence, $\lambda_1$ and $\lambda_2$ reweight the contraints on the sequence and the structure, respectively. $\gL_\textit{KL}$ prevents the scale of $\vz_i$ from exploding and constrains $\vec{\vz}_i$ around $\texttt{C}_\alpha$ to retain necessary geometric information.
Such regularization also helps ensure consistent scales between the peptide latent coordinates and the pocket, mitigating potential issues arising from their different levels of abstraction.
Then we have the overall training objective of the variational autoencoder as follows:
\begin{align}
    \gL_\textit{AE} = \sum\nolimits_{i\in \gG_p} (\gL_\textit{recon}(i) + \gL_\textit{KL}(i)) / |\gG_p|.
\end{align}

We have explored $\text{E}(3)$-invariant latent space, which appears to have difficulties in reconstructing the full-atom structures since it lacks information of geometric interactions with the pocket atoms (Appendix~\ref{app:inv_latent}).

\subsection{Receptor-Specific Affine Transformation}
\label{sec:affine}


With the latent space given by the autoencoder, we further exploit a standard space obtained from receptor-specific affine transformations, which enhances the transferability of diffusions on disparate binding sites (see Figure~\ref{fig:intro}). Most peptides fold into complementary shape upon binding on the receptor~\citep{vanhee2011computational, london2010structural}. Thus, the target distribution is inherently characterized by the shape of the binding site. Given the wide disparity in binding geometries, directly implementing diffusion in the data space yields minimal transferability among different binding sites. To address this deficiency, we propose to implement the diffusion process on a shared standard space converted via an affine transformation derived from the binding site. Formally, denoting the $\texttt{C}_\alpha$ coordinates of the residues in a given binding site $\gG_b$ as $\vec{\mR} \in \R^{3 \times |\gG_b|}$, we can derive their center $\vec{\bm{\mu}} = \E[\vec{\mR}] \in \R^3$ and covariance $\vec{\bm{\Sigma}} = \Cov(\vec{\mR}, \vec{\mR}) \in \R^{3\times 3}$, so that these coordinates can be regarded as sampled from the distribution $\gN(\vec{\bm{\mu}}, \vec{\bm{\Sigma}})$.
We then calculate the Cholesky decomposition~\citep{golub2013matrix} of $\vec{\bm{\Sigma}}$:
\begin{align}
    \vec{\bm{\Sigma}} = \vec{\mL}\vec{\mL}^\top, \vec{\mL} \in \R^{3\times 3},
\end{align}
where $\vec{\mL}$ is a lower triangular matrix. $\vec{\mL}$ is unique~\citep{golub2013matrix} and invertible since the covariance matrix is a real-valued symmetric positive-definite matrix\footnote{The binding site has at least 3 nonoverlapping nodes, namely $\mathrm{rank}(\vec{\mR}) = 3$, thus we can ignore the corner case of semi-positive definite matrices.}. Then we can define the affine transformation $F: \R^3 \rightarrow \R^3$, which enables the projection of the geometry into the standard space approximating standard Gaussian $F(\vec{\mR}) \sim \gN(\bm{0}, \mI)$. Further, we can easily obtain the inverse of $F$ as:
\begin{align}
    \label{eq:affine}
    F(\vec{\vx}) = \vec{\mL}^{-1} (\vec{\vx} - \vec{\bm{\mu}}),
    \qquad
    F^{-1}(\vec{\vx}) = \vec{\mL}\vec{\vx} + \vec{\bm{\mu}}.
\end{align}
With the above definitions, for each given binding site $\gG_b$, we transform the geometry via the derived $F$ to obtain the standard space, where the diffusion model is implemented, and recover the original geometry with $F^{-1}$ (see Figure~\ref{fig:model}) after generation. Notably, we have the following proposition to ensure that the equivariance is maintained under the proposed affine transformation with scalarization-based equivariant GNNs~\citep{han2022geometrically, duval2023hitchhiker}:
\begin{restatable}{proposition}{equivariance}
    \label{prop:equiv}
    Denote the invariant and equivariant outputs from a scalarization-based E(3)-equivariant GNN as $f(\{\vh_i, \vec{\vx}_i\})$ and $\vec{f}(\{\vh_i, \vec{\vx}_i\})$, respectively. With the definition of $F$ in Eq.~\ref{eq:affine}, $\forall g \in \text{E}(3)$, we have $f(\{\vh_i, F(\vec{\vx}_i)\}) = f(\{\vh_i, F_g(g\cdot \vec{\vx}_i)\})$ and $g\cdot F^{-1}(\vec{f}(\{\vh_i, F(\vec{\vx}_i)\})) = F_g^{-1}(\vec{f}(\{\vh_i, F_g(g\cdot \vec{\vx}_i)\}))$, where $F_g$ is derived on the coordinates transformed by $g$. Namely, the E(3)-equivariance is preserved if we implement the GNN on the standard space and recover the original geometry from the outputs.
\end{restatable}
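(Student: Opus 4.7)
The plan is to reduce everything to the single algebraic identity relating the new and old Cholesky factors, and then invoke the hypothesised E(3)-equivariance of the underlying GNN in a standard space rather than the original one. Write $g \in \text{E}(3)$ as $\vec{\vx} \mapsto Q\vec{\vx} + \vt$ with $Q \in \text{O}(3)$ and $\vt \in \R^3$. Applying $g$ to every $\texttt{C}_\alpha$ coordinate of the binding site sends $\vec{\bm{\mu}}$ to $Q\vec{\bm{\mu}} + \vt$ and $\vec{\bm{\Sigma}}$ to $Q\vec{\bm{\Sigma}} Q^\top = (Q\vec{\mL})(Q\vec{\mL})^\top$. Let $\vec{\mL}_g$ denote the (unique) Cholesky factor of the transformed covariance, so that $F_g(\vec{\vy}) = \vec{\mL}_g^{-1}(\vec{\vy} - Q\vec{\bm{\mu}} - \vt)$. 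The first step is to check that the matrix
\begin{align}
    Q' := \vec{\mL}_g^{-1} Q \vec{\mL}
\end{align}
lies in $\text{O}(3)$. This follows at once from $Q'(Q')^\top = \vec{\mL}_g^{-1}(Q\vec{\mL}\vec{\mL}^\top Q^\top)\vec{\mL}_g^{-\top} = \vec{\mL}_g^{-1}\vec{\mL}_g\vec{\mL}_g^\top\vec{\mL}_g^{-\top} = \mI$, and it is the one non-trivial algebraic fact behind the proposition.

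Second, I would verify the key intertwining relation
\begin{align}
    F_g(g \cdot \vec{\vx}) = \vec{\mL}_g^{-1}(Q\vec{\vx} + \vt - Q\vec{\bm{\mu}} - \vt) = \vec{\mL}_g^{-1} Q \vec{\mL} \cdot \vec{\mL}^{-1}(\vec{\vx} - \vec{\bm{\mu}}) = Q' \cdot F(\vec{\vx}),
\end{align}
and dually $F_g^{-1}(Q' \vec{\vy}) = Q\vec{\mL}\vec{\vy} + Q\vec{\bm{\mu}} + \vt = g \cdot F^{-1}(\vec{\vy})$. Both are one-line calculations once $Q'$ is in hand; they say precisely that the diagram relating $F$, $F_g$, $g$ and $Q'$ commutes, so in the standard space the whole group action collapses to the single orthogonal transformation $Q' \in \text{O}(3) \subset \text{E}(3)$.

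Third, I would feed this into the assumed scalarization-based E(3)-equivariance of the GNN. For the invariant head, $f(\{\vh_i, F_g(g\cdot\vec{\vx}_i)\}) = f(\{\vh_i, Q' \cdot F(\vec{\vx}_i)\}) = f(\{\vh_i, F(\vec{\vx}_i)\})$, since $Q'$ is a rigid motion in $\text{E}(3)$ and the scalar output is invariant under it. For the equivariant head, the same substitution together with equivariance yields $\vec{f}(\{\vh_i, F_g(g\cdot\vec{\vx}_i)\}) = Q' \cdot \vec{f}(\{\vh_i, F(\vec{\vx}_i)\})$. Composing with $F_g^{-1}$ and using the identity from the previous step gives
\begin{align}
    F_g^{-1}\bigl(\vec{f}(\{\vh_i, F_g(g\cdot\vec{\vx}_i)\})\bigr) = F_g^{-1}\bigl(Q' \cdot \vec{f}(\{\vh_i, F(\vec{\vx}_i)\})\bigr) = g \cdot F^{-1}\bigl(\vec{f}(\{\vh_i, F(\vec{\vx}_i)\})\bigr),
\end{align}
which is the claimed equivariance.

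The only genuine obstacle is the non-uniqueness issue around Cholesky under the group action: $Q\vec{\mL}$ is generally not lower triangular, so $\vec{\mL}_g \neq Q\vec{\mL}$, and one must absorb the discrepancy into an auxiliary orthogonal factor $Q'$. Once that observation is made, both claims follow by direct calculation together with the GNN's E(3)-equivariance on the standard space; translations and rotations of the original frame are converted into a pure $\text{O}(3)$ action in the standard frame, which is exactly the setting where scalarization-based architectures are designed to be equivariant.
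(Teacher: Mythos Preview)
Your proof is correct and takes a genuinely different, cleaner route than the paper. The paper first proves a key lemma that pairwise distances in the standard space are $\text{E}(3)$-invariant (via the identity $\vec{\mL}_g^{-\top}\vec{\mL}_g^{-1} = \mQ\vec{\mL}^{-\top}\vec{\mL}^{-1}\mQ^\top$), and then unrolls the explicit layer update of a scalarization-based GNN, showing that the messages $\vm_{ij}$ depend only on these invariant distances; for the equivariant head it carries out a lengthy coordinate computation line by line. You instead isolate the single structural fact that $Q' := \vec{\mL}_g^{-1}Q\vec{\mL} \in \text{O}(3)$, derive the intertwining relations $F_g\circ g = Q'\circ F$ and $F_g^{-1}\circ Q' = g\circ F^{-1}$, and then invoke the GNN's $\text{E}(3)$-equivariance as a black box. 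This buys you two things: the argument is layer-agnostic (no need for the paper's reduction to the one-layer case with $F\circ F^{-1}$ inserted between layers), and it applies to \emph{any} $\text{E}(3)$-equivariant network, not only scalarization-based ones. The paper's distance-invariance lemma is in fact an immediate corollary of your orthogonality of $Q'$, so your observation subsumes theirs while avoiding the explicit manipulation of the update formulas.
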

\vspace{-0.05in}
The proof is in Appendix~\ref{app:proof}. This is vital since it indicates the Markov kernel is E(3)-equivariant, and thus ensures the E(3)-invariance of the probability density in the diffusion process~\citep{xu2022geodiff}. Note that our variational autoencoder (\textsection~\ref{sec:autoencoder}) and latent diffusion model (\textsection~\ref{sec:diffusion}) are already designed to be equivariant even without the proposed affine transformation here. The purpose of defining such component is to encourage better generalization of the diffusion processes. Indeed, it is nontrivial to analyze whether such an implementation will break the equivariance of our workflow. Luckily, Proposition~\ref{prop:equiv} manages to prove that scalarization-based equivariant networks~\citep{han2022geometrically}, which is used in our autoencoder and diffusion model, are seamlessly compatible with such affine transformation, naturally preserving equivariance without any requirements of adaption. 

\subsection{Geometric Latent Diffusion Model}
\label{sec:diffusion}
With the aforementioned preparations, the discrete residue types are encoded as continuous latent representations $\{\vz_i\}$, and the full-atom geometry is also compressed and standardized into 3D vectors $\{\vec{\vz}_i\}\sim \gN(\mathbf{0}, \mI)$. Therefore, we are ready to implement a diffusion model on the standard latent space to generate $\vz_i$ and $\vec{\vz}_i$. The forward diffusion process gradually adds noise to the data from $t=0$ to $t=T$, resulting in the prior distribution $\gN(\mathbf{0}, \mI)$. The reverse diffusion process generates data distribution by iteratively denosing the distribution from $t=T$ to $t=0$. We denote $\vec{\vu}_i^t = [\vz_i^t, \vec{\vz}_i^t]$ and $\gG_z^t = \{(\vz_i^t, \vec{\vz}_i^t)\}$ as the intermediate state for node $i$ and the entire peptide at time step $t$, respectively. For simplicity, we assume both $\gG_z^t$ and the binding site $\gG_b$ are already standardized via the transformation $F_b$ in Eq.~\ref{eq:affine}. Then we have the forward process as:
\begin{align}
    q(\vec{\vu}_i^t | \vec{\vu}_i^{t-1}) &= \gN(\vec{\vu}_i^t; \sqrt{1 - \beta^t} \cdot \vec{\vu}_i^{t-1}, \beta^t \mI), \\
    q(\vec{\vu}_i^t | \vec{\vu}_i^{0}) &= \gN(\vec{\vu}_i^t; \sqrt{\Bar{\alpha}^t}\cdot \vec{\vu}_i^0, (1 - \Bar{\alpha}^t)\mI),
\end{align}
where $\beta^t$ is the noise scale increasing with the timestep from $0$ to $1$ conforming to the cosine schedule~\citep{nichol2021improved}, and $\Bar{\alpha}^t = \prod_{s=1}^{s=t}(1 - \beta^s)$. Then the state at timestep $t$ can be sampled as:
\begin{align}
    \label{eq:state_t}
    \vec{\vu}_i^t = \sqrt{\Bar{\alpha}^t} \vec{\vu}_i^0 + (1 - \Bar{\alpha}^t)\vepsilon_i,
\end{align}
where $\vepsilon_i \sim \gN(\mathbf{0}, \mI)$. Following~\citet{ho2020denoising}, the reverse process can be defined with the reparameterization trick as:

\begin{small}
\vskip -0.2in
\begin{gather}
    p_\theta(\vec{\vu}_i^{t-1}| \gG_z^t, \gG_b) = \gN(\vec{\vu}_i^{t-1}; \vec{\vmu}_\theta(\gG_z^t, \gG_b), \beta^t \mI), \\
    \vec{\vmu}_\theta(\gG_z^t, \gG_b) = \frac{1}{\sqrt{\alpha^t}}(\vec{\vu}_i^t - \frac{\beta^t}{\sqrt{1 - \Bar{\alpha}^t}}\vepsilon_\theta(\gG_z^t, \gG_b, t)[i]),
\end{gather}
\vskip -0.1in
\end{small}

where $\alpha^t = 1 - \beta^t$, and $\vepsilon_\theta$ is the denoising network also implemented with the equivariant adaptive multi-channel equivariant encoder in dyMEAN~\citep{kong2023end} to retain full-atom context of the binding site during generation and preserve the equivariance under affine transformations (Proposition~\ref{prop:equiv}). Finally, we have the objective at time step $t$ as MSE between the predicted noise and the added noise in Eq.~\ref{eq:state_t}, as well as the overall training objective $\gL_\textit{LDM}$ as the expectation with respect to $t$:
\begin{align}
    \gL_\textit{LDM} &= \E_{t\sim \operatorname{Uniform}(1...T)}[\sum\nolimits_i \| \vepsilon_i - \vepsilon_\theta(\gG_z^t, \gG_b, t)[i] \|^2 / |\gG_z^t|].
\end{align}

\subsection{Training and Sampling}
\label{sec:alg}
\textbf{Training}~~The training of our \model~can be divided into two phases where a variational autoencoder is first trained and then a diffusion model is trained on the standard latent space. We provide the overall training procedure in Algorithm~\ref{alg:training} (see Appendix~\ref{app:alg}). Note that a smooth and informative latent space is necessary for the consecutive training of the diffusion model, thus we resort to unsupervised data from protein fragments apart from the limited protein-peptide complexes for training the autoencoder, which we describe in Appendix~\ref{app:data}. 

\textbf{Sampling in Ordered Subspace}~~The sampling procedure includes generative diffusion process on the standard latent states, recovering the original geometry with the inverse of $F$ in Eq.~\ref{eq:affine}, and decoding the sequence as well as the full-atom structure of the peptide (see Algorithm~\ref{alg:sampling} in Appendix~\ref{app:alg}). A problem here is that the unordered nature of graphs is not compatible with the sequential nature of peptides, thus the generated residues may have arbitrary permutation on the sequence order. Inspired by the concept of classifier-guided sampling~\citep{dhariwal2021diffusion}, we first assign an arbitrary permutation $\gP$ on the sequence order to the nodes. Then we steer the sampling procedure towards the desired subspace conforming to $\gP$ with the following empirical classifier $p(1|\{\vec{\vz}_i^t\})$, which estimates the probability of the current coordinates belonging to the desired subspace:

\begin{small}
\vskip -0.2in
\begin{align}
    p(1|\{\vec{\vz}_i^t\}) &= \exp(-\sum\nolimits_{\gP(i) - \gP(j) = 1}E(\|\vec{\vz}_i^t - \vec{\vz}_j^t \|)), \\
    E(d) &= \left\{
    \begin{array}{rl}
      d - (\mu_d + 3\sigma_d),   & d > \mu_d + 3\sigma_d, \\
      (\mu_d - 3\sigma_d) - d,   & d < \mu_d - 3\sigma_d, \\
      0                      ,   & \text{otherwise},
    \end{array}
    \right.
\end{align}
\vskip -0.1in
\end{small}

where  $\mu_d$ and $\sigma_d$ are the mean and variance of the distances of adjacent residues in the latent space measured from the training set. Intuitively, this classifier gives higher confidence if the adjacent (defined by $\gP$) residues are within reasonable distances aligning with the statistics from the training set. Nevertheless, the effect of the guidance is relatively minor, which is only a technical trick to enhance the robustness. We provide more details in Appendix~\ref{app:guide}.


\section{Experiments}
\label{sec:exp}
\subsection{Setup}
\label{sec:setup}
\textbf{Task}~~We evaluate our \model~and baselines on the following tasks: (1) \textbf{sequence-structure co-design} (\textsection\ref{sec:codesign}) aims to generate both the sequence and the structure of the peptide given the specific binding site on the receptor (\emph{i.e.} protein). (2) \textbf{Binding Conformation Generation} (\textsection\ref{sec:fixseq}) requires to generate the binding state of the peptide given its sequence and the binding site of interest.

\textbf{Dataset}~~We first extract all dimers from the Protein Data Bank (PDB)~\citep{berman2000protein} and select the complexes with a receptor longer than 30 residues and a ligand between 4 to 25 residues~\citep{tsaban2022harnessing}. Then we remove the duplicated complexes with the criterion that both the receptor and the peptide has a sequence identity over 90\%~\citep{steinegger2017mmseqs2}, after which 6105 non-redundant complexes are obtained. To achieve the cross-target generalization test, we utilize the large non-redundant dataset (LNR) from~\citet{tsaban2022harnessing} as the test set, which contains 93 protein-peptide complexes with canonical amino acids curated by domain experts. We then cluster the data by receptor with a sequence identity threshold of over 40\%, and remove the complexes sharing the same clusters with those from the test set. Finally, the remaining data are randomly split based on clustering results into training and validation sets, yielding a new bechmark calling \textbf{PepBench}. Further, we exploit 70k unsupervised data from protein fragments (\textbf{ProtFrag}) to facilitate training of the variational autoencoder. 
We also implement a split on \textbf{PepBDB}~\citep{wen2019pepbdb} based on clustering results for evaluation. We show details and statistics of these datasets in Appendix~\ref{app:data}.

\textbf{Baselines}~~We first borrow three baselines from the antibody design domain. \textbf{HSRN}~\citep{jin2022antibody} autoregressively decodes the sequence while keeps refining the structure hierarchically, from the $\texttt{C}_\alpha$ to other atoms. \textbf{dyMEAN}~\citep{kong2023end} is equipped with an full-atom geometric encoder and exploits iterative non-autoregressive generation. \textbf{DiffAb}~\citep{luo2022antigen} jointly diffuses on the categorical residue type, the coordinate of $\texttt{C}_\alpha$ as well as the orientation of each residue. Next, we explore two baselines from the general protein design. \textbf{RFDiffusion}~\citep{watson2023novo} exploits a pipeline that first generates the backbone via diffusion and then alternates between inverse folding~\citep{dauparas2022robust} and structure refining based on a physical energy function~\citep{alford2017rosetta}. \textbf{AlphaFold 2}~\citep{jumper2021highly} is the well-known model for protein folding, which also shows certain abilities on peptide conformation prediction~\citep{tsaban2022harnessing}. We also include two traditional methods. \textbf{AnchorExtension}~\citep{hosseinzadeh2021anchor} designs peptides by first docking an existing scaffold to the binding site, and then optimizing the peptide with cycles of mutations guided by energy functions. \textbf{FlexPepDock}~\citep{london2011rosetta} is designed for flexible peptide docking via optimization in the landscape of a physical energy function~\citep{alford2017rosetta}. Implementation details are provided in Appendix~\ref{app:baselines}.

\subsection{Sequence-Structure Co-Design}
\label{sec:codesign}
\textbf{Metrics}~~A favorable generative model should produce diverse candidates while maintaining fidelity to the desired distribution. To comprehensively evaluate the models, we generate 40 candidates for each receptor and employ the following metrics: (1) \textbf{Diversity}. Inspired by~\citep{yim2023se}, we measure the diversity via unique clusters of sequences and structures. Specifically, we hierarchically cluster the structures based on pair-wise root mean square deviation (RMSD) of $\texttt{C}_\alpha$. The diversity of structures $\text{Div}_\textit{struct}$ is defined as the number of clusters versus the number of candidates. A similar procedure can be applied to the sequences to obtain $\text{Div}_\textit{seq}$, utilizing the similarity~\citep{lei2021deep} derived from alignment~\citep{henikoff1992amino}. Then the co-design diversity is $\sqrt{\text{Div}_\textit{seq}\text{Div}_\textit{struct}}$. (2) \textbf{Consistency}. We measure how well the models learn the 1D\&3D joint distribution by the sequence-structure consistency, quantified via Cram\'er's V~\citep{cramer1999mathematical} association between the clustering labels (as in Diversity) of the sequences and the structures. High consistency indicates that candidates with similar sequences also have similar structures, implying that the generative model effectively captures the dependency between 1D and 3D. (3) $\mathbf{\Delta G}$. Aligned with the literature~\citep{kong2023end, luo2022antigen}, we employ the binding energy (kcal/mol) provided by Rosetta~\citep{alford2017rosetta}, a widely-used suite for biomolecular modeling with physical energy functions, to evaluate the binding affinity of the generated candidates. Lower $\Delta G$ indicates stronger binding between the peptide and the target. (4) \textbf{Success}. We report the proportion of successful designs (\emph{i.e.} $\Delta G < 0$, indicating no severe atomic clashes or twisted conformations) among all the candidates.

\begin{table}[htbp]
\centering
\vskip -0.1in
\caption{Evaluation on sequence-structure co-design. On each target, 40 candidates are generated for evaluation. Div. and Con. are abbreviations for diversity and consistency, respectively.}
\label{tab:codesign}
\scalebox{0.9}{
\begin{tabular}{cccccccccc}
\toprule
\multirow{2}{*}{Model}& \multicolumn{4}{c}{PepBench}                                               & & \multicolumn{4}{c}{PepBDB}\\ \cline{2-5} \cline{7-10}
             & Div.($\uparrow$) & Con.($\uparrow$) & $\Delta G$($\downarrow$) & Success            & & Div.($\uparrow$) & Con.($\uparrow$) & $\Delta G$($\downarrow$) & Success            \\ \midrule
Test Set     & -                & -                & -35.25                   & 95.70\%            & & -                & -                & -35.96                   & 95.79\%            \\ \hline
HSRN\tablefootnote{HSRN and dyMEAN generate homogeneous structures that are clustered together yet still sample very different sequences, leading to zero association between squence and structure.}         & 0.158            & \hphantom{00}0.0 & $\geq 0$                      & 10.46\%   & & 0.111            & \hphantom{00}0.0 & $\geq 0$                      & 10.86\% \\
dyMEAN       & 0.150            & \hphantom{00}0.0 & \hphantom{0}-2.26        & 14.60\%            & & 0.150            & \hphantom{00}0.0 & \hphantom{0}-1.92        & \hphantom{0}6.26\% \\
DiffAb       & 0.427            & 0.670            & -21.20                   & 49.87\%            & & 0.269            & 0.463            & -18.40                   & 41.45\% \\
\model~(ours)& \textbf{0.506}   & \textbf{0.789}   & \textbf{-21.94}          & \textbf{55.97\%}   & & \textbf{0.692}   & \textbf{0.923}   & \textbf{-21.53}          & \textbf{48.47\%}\\ \bottomrule
\end{tabular}
}
\vskip -0.1in
\end{table}

For all metrics except $\Delta G$, we first compute values for each receptor individually and then average the results across different receptors. For $\Delta G$, we identify the best candidate on each receptor as the outputs and report the median value across different receptors. Details about the metrics are provided in Appendix~\ref{app:metrics}, including discussion on AAR (Appendix~\ref{app:why_not_aar}) and consistency (Appendix~\ref{app:consistency_discuss}).

\textbf{Results}~~Table~\ref{tab:codesign} illustrates that our~\model~generates significantly more diversified and consistent peptides with better binding energy and success rates compared to the baselines. When benchmarking HSRN, dyMEAN, and DiffAb, which perform well on antibody CDR design, we observe a notable performance gap between non-diffusion baselines (\emph{i.e.} HSRN, dyMEAN) and the diffusion-based baseline (\emph{i.e.} DiffAb), suggesting the higher complexity in peptide design and the need for stronger modeling capabilities. Compared to DiffAb, which operates on categorical residue types, $\text{C}_\alpha$ coordinates and orientations, our~\model~(1) better captures the dependency between sequence and structure, as indicated by higher diversity and consistency, since diffusion is implemented on the latent space where the representation of sequence and structure are nicely correlated by the autoencoder; (2) more effectively captures the intricate protein-peptide interactions, demonstrated by better $\Delta G$ and success rates, since we leverage the full-atom context of the binding site and enhances generalization capability by converting the geometry into a standard space. We showcase two candidates designed by our \model~with favorable binding energy given by Rosetta in Figure~\ref{fig:demo}. Furthermore, the diversity within successful designs is 0.632, which is higher than that of all designs (0.506), indicating the high structural flexibility of peptides upon successful binding. 

\begin{figure}[htbp]
    \centering
    \includegraphics[width=.8\textwidth]{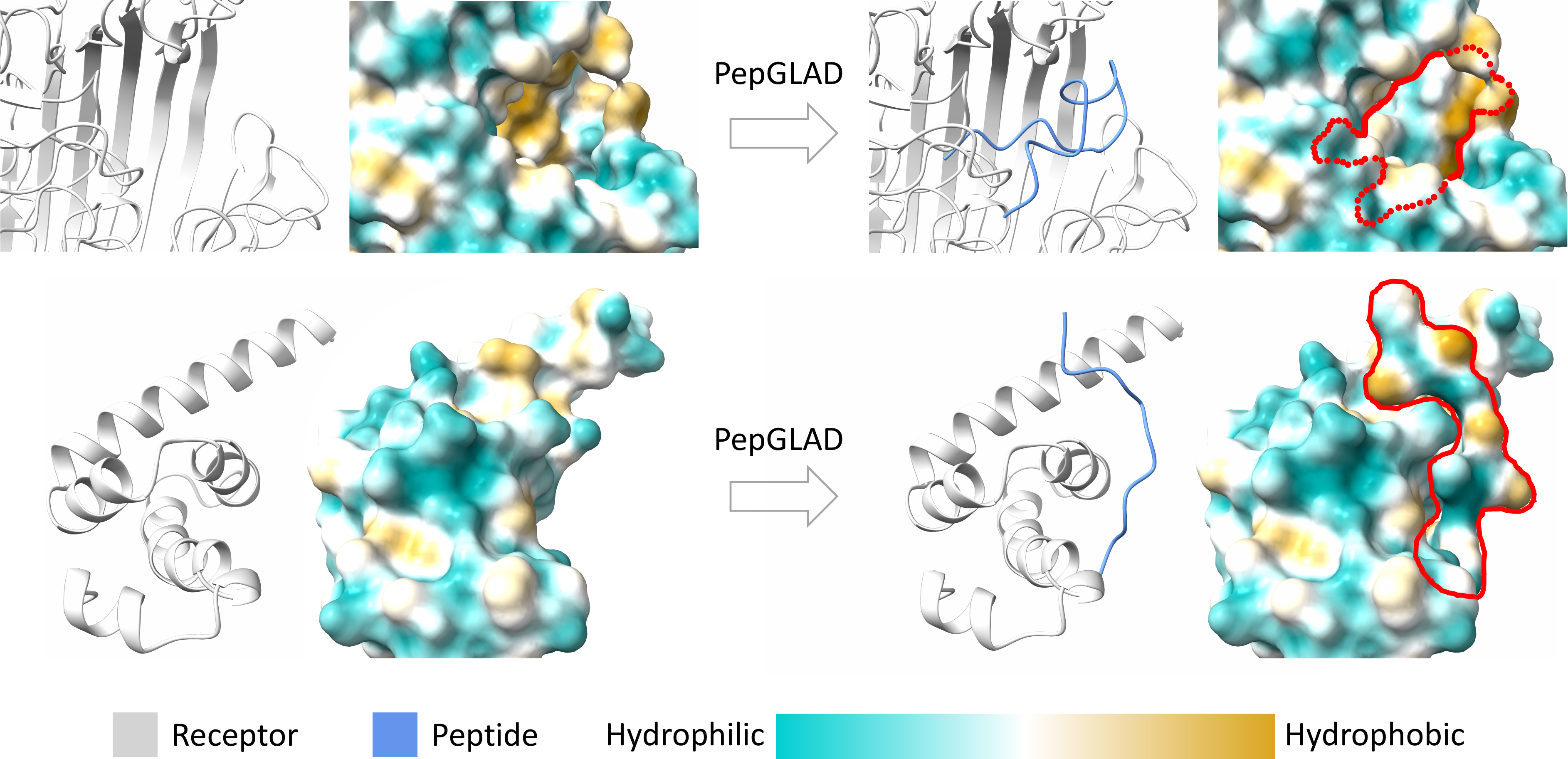}
    \caption{\textbf{Top}: A generated candidate confined within the binding site (PDB=4cu4, $\Delta G$=-34.21). \textbf{Bottom}: A generated candidate with complementary shape to the binding site (PDB=3pkn, $\Delta G$=-33.32). Both candidates form compact interactions at the interface.}
    \label{fig:demo}
\end{figure}

We also evaluate our~\model~against two sophisticated pipeline systems in Table~\ref{tab:codesign_unfair}. The traditional method (\emph{i.e.} AnchorExtension) is limited by low efficiency, thus we can only afford outputting 10 candidates for each receptor. For a relatively fair comparison with RFDiffusion, we refine the structure of the generated candidates using the empirical force field in RFDiffusion. However, the comparison may still disadvantage our~\model, given that RFDiffusion is finetuned from a model pretrained on a large-scale dataset~\citep{watson2023novo}. Nevertheless, as demonstrated in Table~\ref{tab:codesign_unfair}, our model still exhibits marvelous superiority on diversity, consistency, and success rate, while achieving competitive binding energy $\Delta G$, with obviously higher efficiency.

\begin{table}[H]
\centering
\caption{Evaluation on sequence-structure co-design with two well-established systems. Time cost is measured as the total time spent divided by the number of designed candiates.}
\label{tab:codesign_unfair}
\scalebox{0.9}{
\begin{tabular}{cccccr}
\toprule
Model           & Div.($\uparrow$) & Con.($\uparrow$) & $\Delta G$($\downarrow$) & Success          & \multicolumn{1}{c}{Time}   \\ \midrule
AnchorExtension & 0.245            & 0.423            & -26.80                   & 84.30\%          & 735s                       \\
RFDiffusion     & 0.259            & 0.696            & \textbf{-33.82}          & 79.68\%          & 61s                        \\
\model~(ours)   & \textbf{0.506}   & \textbf{0.789}   & -29.36                   & \textbf{92.82\%} & \textbf{3s}                \\ \bottomrule
\end{tabular}
}
\end{table}

\subsection{Binding Conformation Generation}
\label{sec:fixseq}
\textbf{Metrics}~~For each receptor, we generate 10 candidates and report the median value of the following metrics across different receptors to measure how well the generated distribution can recover the reference conformation: (1) $\textbf{RMSD}_{\texttt{C}_\alpha}$: Root mean square deviation on the coordinates of $\texttt{C}_\alpha$ between a candidate and a reference structure with the unit \AA.
(2) $\textbf{RMSD}_{\text{atom}}$: RMSD on all atoms to measure the quality of the full-atom geometry.
(3) \textbf{DockQ}~\citep{basu2016dockq}. A comprehensive metric evaluating the full-atom similarity on the interface between a candidate and a reference complex. It ranges from 0 to 1, with values above 0.23 and 0.49 considered as acceptable and medium quality, respectively.

\begin{table}[htbp]
\centering
\caption{Evaluation on binding conformation generation. On each target, 10 candidates are generated to calculate the optimal recall of the reference conformation.}
\label{tab:fixseq}
\scalebox{0.85}{
\begin{tabular}{cccccccc}
\toprule
\multirow{2}{*}{Model} & \multicolumn{3}{c}{PepBench}                                                                                  &  & \multicolumn{3}{c}{PepBDB}                                                                                    \\ \cline{2-4} \cline{6-8} 
                       & $\text{RMSD}_{\text{C}_\alpha}$($\downarrow$) & $\text{RMSD}_{\text{atom}}$($\downarrow$) & DockQ($\uparrow$) &  & $\text{RMSD}_{\text{C}_\alpha}$($\downarrow$) & $\text{RMSD}_{\text{atom}}$($\downarrow$) & DockQ($\uparrow$) \\ \midrule
FlexPepDock            & 6.43                                          & 7.52                                      & 0.393             &  & -                                             & -                                         & -                 \\
AlphaFold 2            & 8.49                                          & 9.20                                      & 0.355             &  & -                                             & -                                         & -                 \\
dyMEAN                 & 7.96                                          & 8.35                                      & 0.374             &  & 17.64                                         & 17.56                                     & 0.142             \\
HSRN                   & 6.02                                          & 7.59                                      & 0.508             &  & \hphantom{0}9.28                              & \hphantom{0}9.72                          & 0.394             \\
DiffAb                 & 4.23                                          & 7.60                                      & 0.586             &  & 13.96                                         & 13.12                                     & 0.236             \\ \hline
PepGLAD (ours)         & \textbf{4.09}                                 & \textbf{5.30}                             & \textbf{0.592}    &  & \textbf{\hphantom{0}8.87}                     & \textbf{\hphantom{0}8.62}                 & \textbf{0.403}    \\ \bottomrule
\end{tabular}
}
\end{table}

\textbf{Results}~~As shown in Table~\ref{tab:fixseq}, our~\model~surpasses all the baselines in terms of both $\text{RMSD}_{\texttt{C}_\alpha}$ and DockQ by a large margin, highlighting the superiority of incorporating the full-atom context and the binding-site shape into the latent diffusion process. Additionally, we present the distribution of the best $\text{RMSD}_{\texttt{C}_\alpha}$ on different test receptors using box plots and showcase a generated conformation highly resembling the reference in Figure~\ref{fig:fixseq_demo}. The distribution reveals that our model achieves favorable performance on $\text{RMSD}_{\texttt{C}_\alpha}$ with lower variance on the test set compared to other baselines, exhibiting robust generalization ability across disparate binding sites.

\begin{figure}[H]
    \centering
    \includegraphics[width=.95\textwidth]{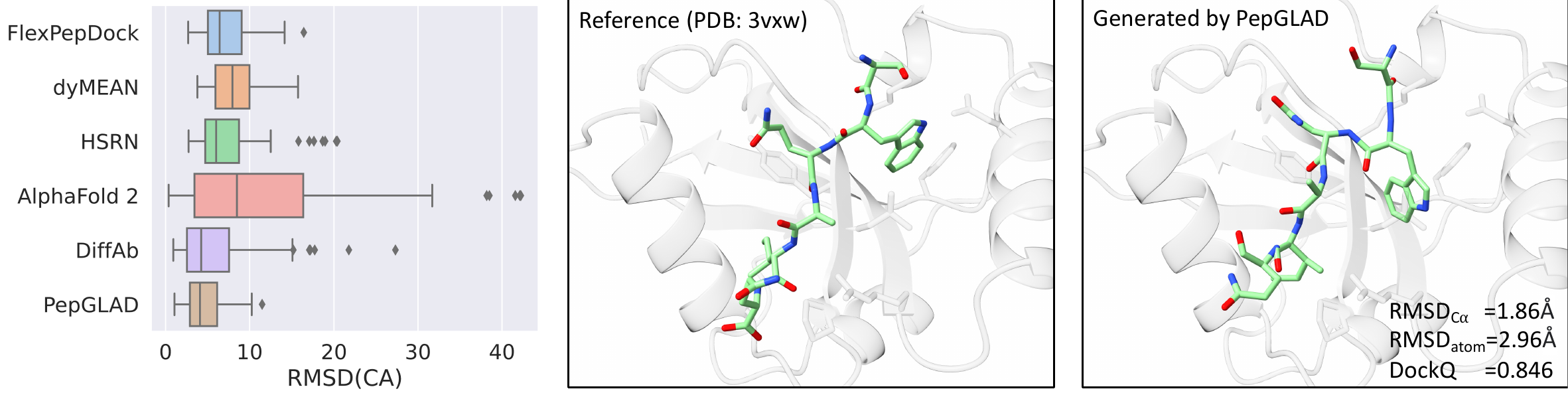}
    \caption{The distribution of $\text{RMSD}_{\texttt{C}_\alpha}$ on the test set of PepBench and a visualized sample.}
    \label{fig:fixseq_demo}
\end{figure}

\section{Analysis}
\label{sec:analysis}

We conduct the following ablations: the full-atom geometry (\textbf{Full-Atom}); the affine transformation (\textbf{Affine}); the unsupervised data from protein fragments (\textbf{ProtFrag}) and the mask policy (\textbf{Mask}) when training the autoencoder.
Note that generative performance is assessed from various aspects, and improvement in one aspect at the disproportionate expense of others might be meaningless. Thus, we additionally compute the average of all the metrics to evaluate the comprehensive effect of each module, where $\Delta G$ is normalized by the statistics on the test set. Table~\ref{tab:ablation} demonstrates the following observations: (1) Discarding the full-atom context results in a significant degradation on all metrics, especially the success rate, implying the necessity of the full-atom context in capturing the intricate protein-peptide interactions; (2) Implementing the diffusion directly on the data space without the proposed affine transformation incurs a notably adverse impact on all metrics, indicating the remarkable enhancement on the generalization capability made by the affine transformation; (3) Training without the unsupervised data leads to a less informative latent space,  exerting a negative effect on the binding energy and success rate; (4) Removal of the mask policy reduces the correlation between sequence and structure in the latent space, thus harms the consistency.

\begin{table}[htbp]
\centering
\vskip -0.1in
\caption{Ablations on different components. Avg. computes the average of all metrics, where $\Delta G$ is first normalized by the median value of the references on test set.}
\label{tab:ablation}
\useunder{\uline}{\ul}{}
\scalebox{0.9}{
\begin{tabular}{lcccc|c}
\toprule
\multicolumn{1}{c}{Ablations} &Div.($\uparrow$)&Con.($\uparrow$)& $\Delta G$($\downarrow$)& Success  & Avg.             \\ \midrule
\model                        & 0.506          & 0.789          & -21.94          & 55.97\%          &  \textbf{0.619}  \\ \hline
w/o Full-Atom                 & 0.441          & 0.751          & -20.87          & 51.18\%          &  0.574           \\
w/o Affine                    & 0.450          & 0.740          & -19.08          & 52.39\%          &  0.564           \\
w/o ProtFrag                  & 0.535          & 0.760          & -20.16          & 52.15\%          &  0.597           \\
w/o Mask                      & 0.422          & 0.741          & -20.45          & 57.44\%          &  0.579           \\ \bottomrule
\end{tabular}}
\vskip -0.15in
\end{table}

\section{Limitations}
\label{sec:limitation}
Despite the promising results, we acknowledge several limitations which might be addressed by future work. First, the binding affinity assessment relies on the Rosetta scoring function as a proxy for wetlab experiments. There may be discrepancies between the predicted and actual binding energies. The ultimate test of a peptide utility is its performance \emph{in vivo}, which is too costly for large-scale evaluation. Nevertheless, this is a problem confronting the entire community, and we hope future research might propose more reliable \emph{in silico} proxies to bridge the gap. Second, while this paper addresses peptide design from the aspect of proteins, it might also be reasonable to think from the aspect of small molecules if the peptides are short enough. Under such circumstances, it is also beneficial to further explore counterparts of methods for small molecule design~\citep{luo20213d, peng2022pocket2mol, lin2024functional}, which we leave for future work.

\section{Conclusion}
\label{sec:conclusion}
In this paper, we first assemble a dataset from Protein Data Bank (PDB) and literature to benchmark generative models on target-specific peptide design in terms of diversity, consistency, and binding energy. Subsequently, we propose~\model, a powerful diffusion-based model for full-atom peptide design. In particular, we explore diffusion on the latent space where the sequence and the full-atom structure are jointly encoded by a variational autoencoder. We further propose a receptor-specific affine transformation technique to project variable geometries in the data space into a standard space, which enhances the transferability of diffusion processes on disparate binding sites. Our~\model~outperforms the existing models on sequence-structure co-design and binding conformation generation, exhibiting high generalization across diverse binding sites. Our work represents a pioneering effort in the exploration of deep generative models for simultaneous design of 1D sequences and 3D structures of peptides, which could inspire future research in this field.

\section*{Software and Data}
The curated PepBench and ProtFrag are available at \url{https://zenodo.org/records/13373108}. The codes for our PepGLAD are open-sourced at \url{https://github.com/THUNLP-MT/PepGLAD}.

\section*{Impact Statements}
\label{sec:impact}
This paper aims to advance the field of peptide design through the construction of a benchmark and the development of a novel latent diffusion model, \model, which addresses key limitations in current methods. Our work represents a step forward in computational peptide design, with the potential to impact both scientific research and practical applications in various domains. For instance, more precise peptide design could lead to enhanced drugs in the pharmaceutical industry, and could facilitate the creation of new biomaterials, sensors, and other innovative technologies in biology and materials science. We wish our paper could inspire future research in this field.

\section*{Acknowledgments}
This work is jointly supported by the National Key R\&D Program of China (No.2022ZD0160502), the National Natural Science Foundation of China (No. 61925601, No. 62376276, No. 62276152), and Beijing Nova Program (20230484278).

\bibliographystyle{abbrvnat}
\bibliography{reference}

\newpage
\appendix
\section{Reconstruction of Full-Atom Geometry}
\label{app:aux}

\subsection{Auxilary Loss for Training the AutoEncoder}
To better recover the all-atom geometry, we employ an auxilary structural loss similar to the violation loss in~\citet{jumper2021highly} including the supervision on $\texttt{C}_\alpha$ coordinates, bond lengths, and side-chain dihedral angles. First, since the $\texttt{C}_\alpha$ is critical in deciding the global geometry of the peptide, we exert additional loss on its coordinates to distinguish it from other atoms:
\begin{align}
    \gL_\textit{CA}(i) = \operatorname{MSE}(\vec{\vr}_i, \vec{\vr}_i'),
\end{align}
where $\vec{\vr}_i'$ and $\vec{\vr}_i$ are the reconstructed and the ground truth coordinates of $\texttt{C}_\alpha$ in node $i$. Next, we implement L1 loss on the bond lengths:
\begin{align}
    \gL_\textit{bond}(i) = \sum\nolimits_{b\in\gB(i)} |b - b'| / |\gB(i)|,
\end{align}
where $\gB(i)$ includes all chemical bonds in node $i$, and $b'$ denotes the reconstructed bond length. For simplicity, he bonds between residues are included into the bonds of the former residue. Finally, we supervise on the $\chi_1$ to $\chi_4$ side-chain dihedral angles~\citep{zhang2023diffpack}:
\begin{align}
    \gL_\textit{angle}(i) = \sum\nolimits_{\chi \in \gA(i)} |\chi - \chi'| / |\gA(i)|,
\end{align}
where $\gA(i)$ includes all side-chain dihedral angles in node $i$, $\chi'$ and $\chi$ denotes the reconstructed and the ground truth angles, respectively. The overall auxilary loss for node $i$ is then given by:
\begin{align}
    \gL_\textit{aux}(i) = \lambda_\textit{CA}\gL_\textit{CA}(i) + \lambda_\textit{bond}\gL_\textit{bond}(i) + \lambda_\textit{angle}\gL_\textit{angle}(i),
\end{align}
where we set $\lambda_\textit{CA} = 1.0, \lambda_\textit{bond} = 1.0, \lambda_\textit{angle} = 0.5$ in our experiments. We find that it is necessary to set $\lambda_\textit{angle}$ with a relatively small value to make the training process stable.

\subsection{Idealization of Local Geometry}
Preserving atom instances enhances the modeling of side chain interactions but introduces challenges due to potential twisting in local geometry. While in sequence-structure co-design, samples undergo fast relax by physical force field to ensure a valid local geometry, in binding conformation generation, we use an alignment technique to place the idealized side chains on the generated atom instances. Specifically, the idealized side chain can be represented as at most 4 dihedral angles ($\chi$-angles), treating fragments like phenyl group as rigid bodies. Suppose there are $n_i$ $\chi$-angles and $c_i$ atoms in node $i$, we can define a function to map from the $\chi$-angles with the backbone coordinates $\vec{\mB}_i \in \R^{4\times 3}$ to atom instances as $M_i: \R^{n_i} \times \R^{4\times 3} \rightarrow \R^{c_i\times 3}$, which is luckily differentiable~\citep{ingraham2023illuminating}. Thus we can optimize $\chi$-angles via gradient descent to minimize the MSE between the coordinates constructed from $\chi$-angles and those generated by the model:
\begin{align}
    \bm{\chi}_i^\ast = \arg\min\nolimits_{\bm{\chi} \in [0, 2\pi]^{n_i}}\|M_i(\bm{\chi}, \vec{\mB}_i) - \vec{\mX}_i\|^2,
\end{align}
where $\vec{\mX}_i \in \R^{c_i \times 3}$ denotes the coordinates of $c_i$ atom instances in node $i$ generated by the model. Now it should be easy to construct an idealized side chain maintaining fidelity to the generated atom instances by $M(\bm{\chi}_i^\ast, \vec{\mB}_i)$. In the experiments of conformation generation on PepBench, such idealization gives slightly better DockQ and $\text{RMSD}_{\text{atom}}$ than Rosetta side-chain packing algorithm but with higher efficiency.

\section{Proof of Proposition~\ref{prop:equiv}}
\label{app:proof}

\renewcommand\thesection{\arabic{section}}
\setcounter{section}{3}
\setcounter{theorem}{0}
\equivariance*
\renewcommand\thesection{\Alph{section}}
\setcounter{section}{3}
\setcounter{theorem}{0}

For simplicity, given $F$ derived from a set of coordinates $\{\vec{\vx}_i\}$ following Eq.~\ref{eq:affine}, we use $F_g$ to indicate the affine transformation derived from $\{g\cdot \vec{\vx}_i\}$, where $g\in E(3)$. Additionally, we keep the terminology "standard space" for describing the space after the data-specific affine transformation $F$.
We begin by proving a key lemma, that is, the E(3)-invariance of distances between two nodes in the standard space converted by $F$:
\begin{lemma}
    \label{lem:dist_invariance}
    Given two nodes $i$ and $j$ in the geometric graph $\gG$, denoting their coordinates as $\vec{\vx}_i$ and $\vec{\vx}_j$, their distance in the standard space is E(3)-invariant. Namely, $\forall g \in E(3), \|F(\vec{\vx}_i) - F(\vec{\vx}_j)\| = \|F_g(g\cdot \vec{\vx}_i) - F_g(g\cdot \vec{\vx}_j)\|$.
\end{lemma}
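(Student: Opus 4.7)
My plan is to reduce everything to quadratic forms involving $\vec{\bm{\Sigma}}^{-1}$, which transforms cleanly under the action of $g$, and to sidestep the fact that the Cholesky factor $\vec{\mL}_g$ itself may not be any nice function of $\vec{\mL}$ and $g$. First, I would write an arbitrary $g \in E(3)$ as $g \cdot \vec{\vx} = \mQ \vec{\vx} + \vec{\vt}$ with $\mQ \in O(3)$ and $\vec{\vt} \in \R^3$. Applying this pointwise to the binding-site coordinates $\vec{\mR}$ and taking expectation and covariance gives the transformed mean $\mQ \vec{\bm{\mu}} + \vec{\vt}$ and the transformed covariance $\mQ \vec{\bm{\Sigma}} \mQ^\top$, so by Eq.~\ref{eq:affine} the map $F_g$ satisfies $F_g(\vec{\vy}) = \vec{\mL}_g^{-1}(\vec{\vy} - \mQ \vec{\bm{\mu}} - \vec{\vt})$ with $\vec{\mL}_g \vec{\mL}_g^\top = \mQ \vec{\bm{\Sigma}} \mQ^\top$.

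Next, I would compute both sides of the claimed equality explicitly. Writing $\vec{\vd} := \vec{\vx}_i - \vec{\vx}_j$, the translation parts cancel in each difference, giving
\begin{align}
F(\vec{\vx}_i) - F(\vec{\vx}_j) &= \vec{\mL}^{-1}\vec{\vd}, \\
F_g(g\cdot\vec{\vx}_i) - F_g(g\cdot\vec{\vx}_j) &= \vec{\mL}_g^{-1}\mQ\vec{\vd}.
\end{align}
Squaring the norms and using $\vec{\mL}^{-\top}\vec{\mL}^{-1} = (\vec{\mL}\vec{\mL}^\top)^{-1} = \vec{\bm{\Sigma}}^{-1}$ and likewise $\vec{\mL}_g^{-\top}\vec{\mL}_g^{-1} = (\mQ\vec{\bm{\Sigma}}\mQ^\top)^{-1}$, the two squared distances become $\vec{\vd}^\top \vec{\bm{\Sigma}}^{-1} \vec{\vd}$ and $\vec{\vd}^\top \mQ^\top (\mQ\vec{\bm{\Sigma}}\mQ^\top)^{-1} \mQ \vec{\vd}$ respectively.

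Finally, I would use $\mQ^{-1} = \mQ^\top$ to simplify $(\mQ\vec{\bm{\Sigma}}\mQ^\top)^{-1} = \mQ \vec{\bm{\Sigma}}^{-1} \mQ^\top$, so that the second quadratic form collapses to $\vec{\vd}^\top \mQ^\top \mQ \vec{\bm{\Sigma}}^{-1} \mQ^\top \mQ \vec{\vd} = \vec{\vd}^\top \vec{\bm{\Sigma}}^{-1} \vec{\vd}$, matching the first. Taking square roots yields the lemma. The main conceptual obstacle I expect is resisting the tempting but wrong shortcut that $\vec{\mL}_g = \mQ \vec{\mL}$: in general $\mQ \vec{\mL}$ is not lower triangular, so it is not the Cholesky factor of $\mQ \vec{\bm{\Sigma}} \mQ^\top$, and uniqueness of Cholesky forbids this identity. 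The trick is precisely that we never need to know $\vec{\mL}_g$ on the nose; only the product $\vec{\mL}_g \vec{\mL}_g^\top$ enters the distance computation, and that product is canonical.
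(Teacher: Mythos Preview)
Your proof is correct and follows essentially the same route as the paper's: both decompose $g$ into $\mQ\in O(3)$ and $\vec{\vt}\in\R^3$, compute the transformed mean and covariance, reduce the squared distances to quadratic forms in $\vec{\vx}_i-\vec{\vx}_j$, and use $\mQ^{-1}=\mQ^\top$ to equate them. Your explicit packaging through $\vec{\bm{\Sigma}}^{-1}$ (rather than the paper's $\vec{\mL}^{-\top}\vec{\mL}^{-1}$) and your remark that $\vec{\mL}_g\neq\mQ\vec{\mL}$ in general are nice clarifications but do not change the argument.
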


\begin{proof}
    $\forall g\in E(3)$, $g$ can be instantiated as an orthogonal matrix $\mQ\in O(3)$ (including rotation and reflection), and a translation vector $\vec{\vt}\in\R^3$. Denoting all coordinates in the geometric graph as $\vec{\mX}\in \R^{3 \times |\gG|}$, and the number of nodes in $\gG$ as $n$, we can derive the E(3)-equivariance of the expectation of the coordinates:
    \begin{align}
        \E[g\cdot \vec{\mX}]
        &= \frac{1}{n}\sum_{i} g\cdot \vec{\vx}_i = \frac{1}{n}\sum_i (\mQ \vec{\vx}_i + \vec{\vt}) \\
        &= \frac{1}{n}\mQ(\sum_i  \vec{\vx}_i) + \vec{\vt} = \mQ \E[\vec{\mX}] + \vec{\vt} \\
        &= g\cdot \E[\vec{\mX}].
    \end{align}
    With the E(3)-equivariance of the expectation, it is easy to derive the following equation on the covariance matrix:
    \begin{align}
        \Cov(g\cdot \vec{\mX}, g\cdot \vec{\mX}) 
        &= \frac{1}{n-1}(g\cdot \vec{\mX} - \E[g\cdot\vec{\mX}])(g\cdot\vec{\mX} - \E[g\cdot\vec{\mX}])^\top \\
        &= \frac{1}{n-1}(g\cdot \vec{\mX} - g\cdot\E[\vec{\mX}])(g\cdot\vec{\mX} - g\cdot\E[\vec{\mX}])^\top \\
        &= \frac{1}{n-1}(\mQ(\vec{\mX} - \E[\vec{\mX}]))(\mQ(\vec{\mX} - \E[\vec{\mX}]))^\top \\
        &= \frac{1}{n-1}\mQ (\vec{\mX} - \E[\vec{\mX}])(\vec{\mX} - \E[\vec{\mX}])^\top \mQ^\top \\
        &= \mQ \Cov(\vec{\mX}, \vec{\mX}) \mQ^\top.
    \end{align}
    Based on the Cholesky decomposition used in the derivation of $F$, we denote $\Cov(\vec{\mX}, \vec{\mX}) = \vec{\mL}\vec{\mL}^\top$ and $\Cov(g\cdot\vec{\mX}, g\cdot\vec{\mX}) = \vec{\mL}_g\vec{\mL}_g^\top$, with which we can immediately derive:
    \begin{align}
        \vec{\mL}_g\vec{\mL}_g^\top = \mQ\vec{\mL}\vec{\mL}^\top\mQ^\top.
    \end{align}
    Considering $\mQ^{-1} = \mQ^\top$, we further have the following equation:
    \begin{align}
        \vec{\mL}_g^{-\top}\vec{\mL}_g^{-1} = \mQ\vec{\mL}^{-\top}\vec{\mL}^{-1}\mQ^\top.
    \end{align}
    Given that $F(\vec{\vx}) = \vec{\mL}^{-1}(\vec{\vx} - \E[\vec{\mX}])$, we are now ready to prove Lemma~\ref{lem:dist_invariance} as follows:
    \begin{align}
        \|F_g(g\cdot \vec{\vx}_i) - F_g(g\cdot\vec{\vx}_j)\|
        &= \|\vec{\mL}_g^{-1}(g\cdot\vec{\vx}_i - \E[g\cdot\vec{\mX}]) - \vec{\mL}_g^{-1}(g\cdot\vec{\vx}_j - \E[g\cdot\vec{\mX}])\| \\
        &= \|\vec{\mL}_g^{-1}(g\cdot\vec{\vx}_i - g\cdot\vec{\vx}_j) \| = \|\vec{\mL}_g^{-1}\mQ(\vec{\vx}_i - \vec{\vx}_j) \| \\
        &= \sqrt{(\vec{\mL}_g^{-1}\mQ(\vec{\vx}_i - \vec{\vx}_j))^\top (\vec{\mL}_g^{-1}\mQ(\vec{\vx}_i - \vec{\vx}_j))} \\
        &= \sqrt{(\vec{\vx}_i - \vec{\vx}_j)^\top \mQ^\top \vec{\mL}_g^{-\top}\vec{\mL}_g^{-1}\mQ (\vec{\vx}_i - \vec{\vx}_j)} \\
        &= \sqrt{(\vec{\vx}_i - \vec{\vx}_j)^\top\vec{\mL}^{-\top}\vec{\mL}^{-1}(\vec{\vx}_i - \vec{\vx}_j)} \\
        &= \sqrt{(\vec{\mL}^{-1}(\vec{\vx}_i - \vec{\vx}_j))^\top(\vec{\mL}^{-1}(\vec{\vx}_i - \vec{\vx}_j))} = \|\vec{\mL}^{-1}(\vec{\vx}_i - \vec{\vx}_j)\| \\
        &= \|\vec{\mL}^{-1}(\vec{\vx}_i - \E[\vec{\mX}]) - \vec{\mL}^{-1}(\vec{\vx}_j - \E[\vec{\mX}])\| = \|F(\vec{\vx}_i) - F(\vec{\vx}_j)\|
    \end{align}
\end{proof}

With Lemma~\ref{lem:dist_invariance}, we are able to give the proof of Proposition~\ref{prop:equiv} as follows.

\begin{proof}
    A first observation is that to prove Proposition~\ref{prop:equiv}, we only need to prove the equivariance in the 1-layer case, since the multi-layer case can be decomposed into the 1-layer case by inserting $I = F\circ F^{-1}$ between layers, where $I$ is the identical mapping.
    Generally, each layer in scalarization-based E(3)-equivariant GNN has the following paradigm:
    \begin{align}
        \vm_{ij} &= \phi_m(\vh_i, \vh_j, \|\vec{\vx}_i - \vec{\vx}_j\|^2, \ve_{ij}), \\
        \vec{\vx}_i' &= \vec{\vx}_i + \sum\nolimits_{j\in\gN(i)}(\vec{\vx}_i - \vec{\vx}_j) \phi_x(\vm_{ij}), \\
        \vec{\vh}_i' &= \phi_h(\vh_i, \sum\nolimits_{j\in \gN(i)}\vm_{ij}),
    \end{align}
    where $\gN(i)$ denotes the neighborhood of node $i$, and $\phi_x$ outputs a scalar. Therefore, for the invariant part, we have:
    \begin{align}
        f_i(\{\vh_i, F_g(g\cdot \vec{\vx}_i)\})
        & = \phi_h(\vh_i, \sum\nolimits_{j\in \gN(i)}\vm_{ij,F_g})  \\
        & = \phi_h(\vh_i, \sum\nolimits_{j\in \gN(i)}\phi_m(\vh_i, \vh_j, \|F_g(g\cdot\vec{\vx}_i) - F_g(g\cdot\vec{\vx}_j)\|^2, \ve_{ij})) \\
        & = \phi_h(\vh_i, \sum\nolimits_{j\in \gN(i)}\phi_m(\vh_i, \vh_j, \|F(\vec{\vx}_i) - F(\vec{\vx}_j)\|, \ve_{ij})) \\
        & = \phi_h(\vh_i, \sum\nolimits_{j\in \gN(i)}\vm_{ij,F})  \\
        & = f_i(\{\vh_i, F(\vec{\vx}_i)\})
    \end{align}
    For the equivariant features, recalling $F^{-1}(\vec{\vx}) = \vec{\mL}\vec{\vx} + \E[\vec{\mX}]$, we have:
    \begin{align}
        &F_g^{-1}\vec{f}_i(\{\vh_i, F_g(g\cdot \vec{\vx}_i)\}) \\
        &= F_g^{-1} (F_g(g\cdot \vec{\vx}_i) + \sum\nolimits_{j\in\gN(i)}(F_g(g\cdot\vec{\vx}_i) - F_g(g\cdot\vec{\vx}_j)) \phi_x(\vm_{ij,F_g})) \\
        &= F_g^{-1} (F_g(g\cdot \vec{\vx}_i) + \sum\nolimits_{j\in\gN(i)}\vec{\mL}_g^{-1}\mQ(\vec{\vx}_i - \vec{\vx}_j) \phi_x(\vm_{ij,F})) \\
        &= F_g^{-1} (\vec{\mL}_g^{-1}(\mQ\vec{\vx}_i + \vec{\vt} - \E[g\cdot \vec{\mX}]) + \sum\nolimits_{j\in\gN(i)}\vec{\mL}_g^{-1}\mQ(\vec{\vx}_i - \vec{\vx}_j) \phi_x(\vm_{ij,F})) \\
        &= F_g^{-1} (\vec{\mL}_g^{-1}(\mQ\vec{\vx}_i + \vec{\vt} - g\cdot\E[\vec{\mX}]) + \sum\nolimits_{j\in\gN(i)}\vec{\mL}_g^{-1}\mQ(\vec{\vx}_i - \vec{\vx}_j) \phi_x(\vm_{ij,F})) \\
        &= F_g^{-1} (\vec{\mL}_g^{-1}(\mQ\vec{\vx}_i - \mQ\E[\vec{\mX}]) + \sum\nolimits_{j\in\gN(i)}\vec{\mL}_g^{-1}\mQ(\vec{\vx}_i - \vec{\vx}_j) \phi_x(\vm_{ij,F})) \\
        &= F_g^{-1} (\vec{\mL}_g^{-1}\mQ(\vec{\vx}_i - \E[\vec{\mX}]) + \sum\nolimits_{j\in\gN(i)}\vec{\mL}_g^{-1}\mQ(\vec{\vx}_i - \vec{\vx}_j) \phi_x(\vm_{ij,F})) \\
        &= F_g^{-1} (\vec{\mL}_g^{-1}\mQ(\vec{\vx}_i - \E[\vec{\mX}] + \sum\nolimits_{j\in\gN(i)}(\vec{\vx}_i - \vec{\vx}_j) \phi_x(\vm_{ij,F}))) \\
        &= \mQ(\vec{\vx}_i - \E[\vec{\mX}] + \sum\nolimits_{j\in\gN(i)}(\vec{\vx}_i - \vec{\vx}_j) \phi_x(\vm_{ij,F})) + \E[g\cdot\vec{\mX}] \\
        &= \mQ(\vec{\vx}_i + \sum\nolimits_{j\in\gN(i)}(\vec{\vx}_i - \vec{\vx}_j) \phi_x(\vm_{ij,F})) - \mQ\E[\vec{\mX}] + g\cdot\E[\vec{\mX}] \\
        &= \mQ(\vec{\vx}_i + \sum\nolimits_{j\in\gN(i)}(\vec{\vx}_i - \vec{\vx}_j) \phi_x(\vm_{ij,F})) + \vec{\vt} \\
        &= g\cdot (\vec{\vx}_i + \sum\nolimits_{j\in\gN(i)}(\vec{\vx}_i - \vec{\vx}_j) \phi_x(\vm_{ij,F}))
    \end{align}
    By replacing $g$ with identical element $I$ in $E(3)$, since $F = F_I$, we can immediately derive:
    \begin{align}
        F^{-1}\vec{f}_i((\{\vh_i, F(\vec{\vx}_i)\}) &= (\vec{\vx}_i + \sum\nolimits_{j\in\gN(i)}(\vec{\vx}_i - \vec{\vx}_j) \phi_x(\vm_{ij,F})), \\
        g\cdot F^{-1}\vec{f}_i((\{\vh_i, F(\vec{\vx}_i)\}) &= F_g^{-1}\vec{f}_i(\{\vh_i, F_g(g\cdot \vec{\vx}_i)\}),
    \end{align}
    which concludes Proposition~\ref{prop:equiv}.
\end{proof}

\section{Algorithm for Training and Sampling}
\label{app:alg}

We present the pseudo codes for training in Algorithm~\ref{alg:training} amd sampling in Algorithm~\ref{alg:sampling}.
\begin{algorithm}[H]
\caption{Training  Algorithm of \model}
\label{alg:training}
\begin{algorithmic}[1]
\INPUT{geometric data of protein-peptide complexes $\gS$}
\OUTPUT{encoder $\gE_\phi$, decoder $\gD_\xi$, denoising network $\vepsilon_\theta$}
\FUNCTION{TrainAutoEncoder($\gS$)}
    \STATE Initialize $\gE_\phi$, $\gD_\xi$
    \WHILE{$\phi,\xi$ have not converged}
        \STATE Sample $(\gG_p, \gG_b) \sim \gS$
        \STATE $\Tilde{\gG}_p \leftarrow \operatorname{mask}(\gG_p)$\hfill\COMMENT{Mask 25\% Residues}
        \STATE $\{(\vmu_i, \bm{\sigma}_i, \vec{\vmu}_i, \vec{\bm{\sigma}}_i)\} \leftarrow \gE_\phi(\Tilde{\gG}_p, \gG_b)$ \hfill\COMMENT{Encoding}
        \STATE $\{(\vepsilon_i, \vec{\vepsilon}_i)\} \sim \gN(\bm{0}, \mI)$ \hfill\COMMENT{Reparameterization}
        \STATE $\gG_z \leftarrow \{(\vmu_i + \vepsilon_i \odot \bm{\sigma}_i, \vec{\vmu}_i + \vec{\vepsilon}_i \odot \vec{\bm{\sigma}}_i)\} $ 
        \STATE $\gG_p' \leftarrow \gD_\xi (\gG_z, \gG_b)$ \hfill\COMMENT{Decoding}
        \STATE $\gL_\textit{AE} = \sum\nolimits_{i\in \gG_p} (\gL_\textit{recon}(i) + \gL_\textit{KL}(i)) / |\gG_p|$
        \STATE $\phi, \xi \leftarrow \operatorname{optimizer}(\gL_\textit{AE};\phi,\xi)$
    \ENDWHILE
    \STATE \bf{return} $\gE_\phi$, $\gD_\xi$
\ENDFUNCTION
\STATE

\FUNCTION{TrainLatentDiffusion($\gE_\phi, \gD_\xi, \gS$)}
    \STATE Initialize $\vepsilon_\theta$
    \WHILE{$\theta$ have not converged}
        \STATE Sample $(\gG_p, \gG_b) \sim \gS$
        \STATE $\{(\vmu_i, \bm{\sigma}_i, \vec{\vmu}_i, \vec{\bm{\sigma}}_i)\} \leftarrow \gE_\phi(\gG_p, \gG_b)$ \hfill\COMMENT{Encoding}
        \STATE $\gG_z^0 \leftarrow \{(\vmu_i, \vec{\vmu}_i)\}$
        \STATE $F \leftarrow \operatorname{affine}(\gG_b)$  \hfill \COMMENT{Affine Transformation}
        \STATE $\gG_z^0, \gG_b \leftarrow F(\gG_z^0), F(\gG_b)$ \hfill \COMMENT{Standard Geometry}
        \STATE $\{(\vz_i^0, \vec{\vz}_i^0)\} \leftarrow \gG_z^0$
        \STATE $t \sim \rmU(1,T)$, $\{(\vepsilon_i, \vec{\vepsilon}_i)\} \sim \gN(\bm{0}, \mI)$
        \STATE $\gG_z^t \leftarrow \{\sqrt{\Bar{\alpha}^t}[\vz_i^0, \vec{\vz}_i^0] + (1 - \Bar{\alpha}^t)[\vepsilon_i, \vec{\vepsilon}_i]\}$
        \STATE $\gL_\textit{LDM}^t = \sum\nolimits_i \| [\vepsilon_i, \vec{\vepsilon}_i] - \vepsilon_\theta(\gG_z^t, \gG_1, t)[i] \|^2 / |\gG_z^t| $
        \STATE $\theta \leftarrow \operatorname{optimizer}(\gL_\textit{LDM}^t; \theta)$
    \ENDWHILE
    \STATE \bf{return} $\vepsilon_\theta$
\ENDFUNCTION
\STATE

\STATE $\gE_\phi, \gD_\xi \leftarrow \text{TrainAutoEncoder}(\gS)$
\STATE Fix parameters $\phi$ and $\xi$
\STATE $\vepsilon_\theta \leftarrow \text{TrainLatentDiffusion}(\gE_\phi, \gD_\xi, \gS)$

\STATE \bf{return} $\gE_\phi$, $\gD_\xi$, $\vepsilon_\theta$
\end{algorithmic}
\end{algorithm}

\begin{algorithm}[htbp] 
\caption{Sampling Algorithm of \model} 
\label{alg:sampling}
\begin{algorithmic}[1]
\INPUT{decoder $\gD_\xi$, denoising network $\vepsilon_\theta$, binding site $\gG_b$}
\OUTPUT{peptide $\gG_p$}
\STATE $F \leftarrow \operatorname{affine}(\gG_b)$  \hfill \COMMENT{Affine Transformation}
\STATE $\gG_b \leftarrow F(\gG_b)$ \hfill \COMMENT{Standard Geometry}
\STATE $\{(\vz_i^T, \vec{\vz}_i^T)\} \sim \gN(\mathbf{0}, \mI)$
\FOR{$t$ in $T,\, T-1, \cdots, 1$}
    \STATE $\gG_z^t \leftarrow \{(\vz_i^t, \vec{\vz}_i^t)\}$ \hfill \COMMENT{Latent Denoising Loop}
    \STATE $\vec{\vu}_i^{t} \leftarrow \vz_i^{t}, \vec{\vz}_i^{t}$
    \STATE $\bm{\varepsilon}_i = [\vepsilon_i, \vec{\vepsilon}_i] \sim \mathcal{N}(\mathbf{0}, \mI)$ 
    \STATE $\vec{\vu}_i^{t-1} \leftarrow\tfrac{1}{\sqrt{\alpha^t}}(\vec{\vu}_i^t - \frac{\beta^t}{\sqrt{1 - \Bar{\alpha}^t}}\vepsilon_\theta(\gG_z^t, \gG_b, t)[i]) +\beta_t \bm{\varepsilon}_i$
    \STATE $[\vz_i^{t-1}, \vec{\vz}_i^{t-1}] \leftarrow \vec{\vu}_i^{t-1}$
\ENDFOR
\STATE $\gG_z \leftarrow \{(\vz_i^0, \vec{\vz}_i^0)\}$
\STATE $\gG_z, \gG_b \leftarrow F^{-1}(\gG_z), F^{-1}(\gG_b)$ \hfill \COMMENT{Data Geometry}
\STATE $\gG_p \leftarrow \gD_\xi(\gG_z, \gG_b)$ \hfill \COMMENT{Decoding}
\STATE \textbf{return} $\gG_p$
\end{algorithmic}
\end{algorithm}

\section{Data Preparation}
\label{app:data}
We show details for constructing the datasets used in our paper here. Further statistics are presented in Table~\ref{tab:dataset} and distribution of peptide lengths in Figure~\ref{fig:len_dist}.

\subsection{Unsupervised Data from Protein Fragments (ProtFrag)}
We exploit unsupervised data from monomer proteins to enrich the training of the autoencoder. Specifically, we first extract all single chains from the Protein Data Bank (PDB) before December 8th, 2023, and remove the duplicated chains on a sequence identity threshold of over 90\%. Then, for each chain, we extract fragments satisfying the following criteria:
\begin{enumerate}
    \item \textbf{Length}: The fragment should consist of 4 to 25 residues.
    \item \textbf{Balanced constitution}: No single amino acid should constitute more than 25\% of the fragment; Hydrophobic amino acids should comprise less than 45\% of the fragment, with charged amino acids accounting for 25\% to 45\%.
    \item \textbf{Isolated Stability}: Instability~\citep{guruprasad1990correlation} should be below 40; Considering the surrounding amino acids as interaction partners, the fragment should have a buried surface area (BSA) above 400$\text{\AA}^2$~\citep{chothia1975principles}, with a relative BSA above 20\%.
\end{enumerate}
We use FreeSASA~\citep{mitternacht2016freesasa} to calculate the surface area of fragments. Let $\text{SA}_\textit{bound}$ represent the surface area of the isolated fragment when considering surrounding amino acids, and $\text{SA}_\textit{unbound}$ represent the surface area when not considering them. The buried surface area is then calculated as $\text{BSA}= \text{SA}_\textit{unbound} - \text{SA}_\textit{bound}$, and the relative BSA is calculated as $\text{BSA}_\textit{rel} = \text{BSA} / \text{SA}_\textit{unbound}$. In total, we obtain 70,645 fragments meeting these criteria.

\subsection{Construction of Our PepBench}
Here we illustrate the details for constructing the supervised dataset (PepBench) used in our paper. Similar to literature~\citep{wen2019pepbdb}, we also exploits available data from the Protein Data Bank (PDB)~\citep{berman2000protein}. We first extract all dimers in PDB deposited before December 8th, 2023, and filter out the complexes with a receptor longer than 30 residues and a ligand between 4 to 25 residues, which aligns with~\citet{tsaban2022harnessing}. Peptides with lengths in this range are more relevant to practical applications such as drug discovery, as they exhibit favorable biochemical properties~\citep{muttenthaler2021trends}. Then we remove the duplicated complexes with the criterion that both the receptor and the peptide has a sequence identity over 90\%~\citep{steinegger2017mmseqs2}, after which 6105 non-redundant complexes are obtained. We use MMseqs2 for clustering based on sequence identity:

\begin{scriptsize}
\begin{verbatim}
    # create database from the sequences
    mmseqs create seqs.fasta database  
    # clustering with sequence identity above 90%
    mmseqs cluster database database_clusters results --min-seq-id 0.9 -c 0.95 --cov-mode 1  
\end{verbatim}
\end{scriptsize}

To achieve the cross-target generalization test, we utilize the large non-redundant dataset (LNR) introduced by~\citet{tsaban2022harnessing} as the test set, which is curated by domain experts. LNR originally includes 96 protein-peptide complexes. We obtain 93 complexes after excluding the ones with non-canonical amino acids. We then cluster the LNR along with the PDB data by receptor with a sequence identity threshold of over 40\%. Subsequently, we remove the complexes sharing the same clusters with those from the test set and those including non-canonical amino acids in the peptides. Finally, the remaining data are randomly split based on clustering results into training and validation sets. The characteristics of different splits are presented in Table~\ref{tab:dataset}. In addition, the binding site contains residues on the receptor within 10\AA~distances to the peptide, where the distance between two residues is measured by the distance between their $\texttt{C}_\beta$ coordinates.

\subsection{Split of PepBDB}

Similar to the split of PepBench, we use MMseqs2 for clustering and randomly split the data into training, validation, and test sets based on the clustering results. For the test set, we randomly select one protein-peptide complex in each cluster to avoid redundancy.

\begin{table}[htbp]
\centering
\caption{Statistics of the constructed datasets.}
\label{tab:dataset}
\scalebox{0.9}{
\begin{tabular}{lrrl}
\toprule
Split      & \#entry & \#cluster & source                           \\ \midrule
PepBench (Training)   & 4,157   & 952       & PDB~\citep{berman2000protein}    \\
PepBench (Validation) & 114     & 50        & PDB~\citep{berman2000protein}    \\
PepBench (Test)       & 93      & 93        & LNR~\citep{tsaban2022harnessing} \\
PepBDB (Training)   & 8,434   & 1,617       & PepBDB~\citep{wen2019pepbdb} \\ 
PepBDB (Validation) & 370     & 95          & PepBDB~\citep{wen2019pepbdb} \\ 
PepBDB (Test)       & 190     & 190         & PepBDB~\citep{wen2019pepbdb} \\ 
ProtFrag   & 70,645  & -         & monomers in PDB                  \\ \bottomrule
\end{tabular}}
\vskip -0.05in
\end{table}

\begin{figure}[htbp]
    \centering
    \includegraphics[width=\textwidth]{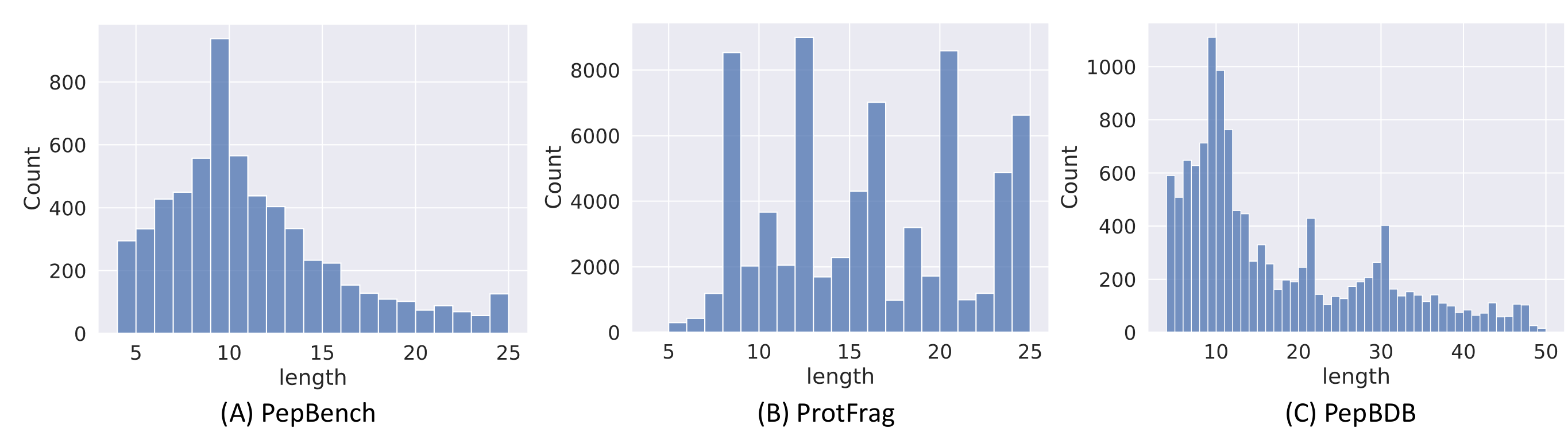}
    \caption{Peptide (or protein fragment) length distribution of three datasets.}
    \label{fig:len_dist}
\end{figure}

\section{E(3)-Invariant Latent Space}
\label{app:inv_latent}
We found that the E(3)-equivariant latent vectors were significant, which can convey sufficient geometric information. If we use E(3)-invariant latent space without these geometric latent vectors, the reconstruction ability (RMSD, DockQ) of the VAE deteriorates significantly (Table~\ref{tab:inv_latent}). It is reasonable since E(3)-invariant latent space lacks geometric interactions with the pocket atoms, leading to difficulties in reconstructing the full-atom structures on the binding site.

\begin{table}[htbp]
\centering
\caption{Comparison of reconstruction ability of variational autoencoders with E(3)-equivariant latent space and E(3)-invariant latent space.}
\label{tab:inv_latent}
\begin{tabular}{cccc}
\toprule
Latent Space     & AAR$\uparrow$ & $\text{RMSD}\downarrow$ & DockQ$\uparrow$ \\ \midrule
E(3)-equivariant & 95.1\%        & 0.79                    & 0.898           \\
E(3)-invariant   & 93.4\%        & 1.75                    & 0.823           \\ \bottomrule
\end{tabular}
\end{table}

\section{Guidance on Sequence Orders for Sampling}
\label{app:guide}
Peptides consist of linearly connected amino acids, which exert constraints on the 3D geometry. Specifically, residues adjacent in the sequence should also be close in the structure, since they are connected by a peptide bond. However, 3D graphs are unordered and do not incorporate such induct bias, which means the generated nodes might have arbitrary permutation on sequence orders. To tackle this problem, we take inspiration from classifier-guided diffusion~\citep{dhariwal2021diffusion}, which adds the gradient of a classifier to the denoising outputs to guide the generative diffusion process towards the subspace where the classifier gives high confidence. We utilize an empirical classifier $p(1|\{\vec{\vz}_i^t\})$ defined as follows:
\begin{align}
    p(1|\{\vec{\vz}_i^t\}) &= \exp(-\sum\nolimits_{\gP(i) - \gP(j) = 1}E(\|\vec{\vz}_i^t - \vec{\vz}_j^t \|)), \\
    E(d) &= \left\{
    \begin{array}{rl}
      d - (\mu_d + 3\sigma_d),   & d > \mu_d + 3\sigma_d, \\
      (\mu_d - 3\sigma_d) - d,   & d < \mu_d - 3\sigma_d, \\
      0                      ,   & \text{otherwise},
    \end{array}
    \right.
\end{align}
where  $\mu_d$ and $\sigma_d$ are the mean and variance of the distances of adjacent residues in the latent space measured from the training set. With $p(1|\{\vec{\vz}_i^t\})$, we are able to assign an arbitrary permutation on sequence orders to the nodes, and steer the sampling procedure towards the desired subspace conforming to $\gP$, since this classifier gives higher confidence if the adjacent (defined by $\gP$) residues are within reasonable distances aligning with the statistics from the training set. In particular, the coordinate denoising outputs are refined as follows:
\begin{align}
    \vec{\vepsilon}_i^t = \vec{\vepsilon}_\theta(\gG_z^t, \gG_b, t)[i] - \lambda\sqrt{1 - \Bar{\alpha}^t} \nabla_{\vec{\vz}_i^t}\log p(1|\{\vec{\vz}_i^t\}),
\end{align}
where $\lambda$ adjusts the weight of the guidance. Besides the constraints on the distance between adjacent residues, we can also include guidance on avoiding clashes between non-adjacent residues by defining the following energy term:
\begin{align}
    C(d) &= \left\{
    \begin{array}{rl}
      \mu_d - d,   & d < \mu_d, \\
      0                      ,   & \text{otherwise},
    \end{array}
    \right.
\end{align}
Subsequently, we just need to revise the empirical classifier as:
\begin{align}
    p(1|\{\vec{\vz}_i^t\}) &= \exp(-\sum_{\gP(i) - \gP(j) = 1}E(\|\vec{\vz}_i^t - \vec{\vz}_j^t \|) - \sum_{\gP(i) - \gP(j) \neq 1} C(\|\vec{\vz}_i^t - \vec{\vz}_j^t \|) - \sum_{i\in\gG_z, j\in \gG_b} C(\|\vec{\vz}_i^t - \vec{\vr}_j \|),
\end{align}
where $\vec{\vr}_j$ is the $\texttt{C}_\alpha$ coordinate of node $j$ in the binding site. We observe a slight improvement upon including the clash energy term. Nevertheless, the guidance is only a small technical trick with minor enhancement on the performance as shown in Table~\ref{tab:guidance}.

\begin{table}[htbp]
\centering
\caption{Results on binding conformation generation with and without guidance on sequence orders.}
\label{tab:guidance}
\begin{tabular}{cccc}
\toprule
Guidance & $\text{RMSD}_{C_\alpha}\downarrow$ & $\text{RMSD}_\text{atom}\downarrow$ & DockQ$\uparrow$ \\ \midrule
w/       & 4.09                               & 5.30                                & 0.592           \\
w/o      & 4.10                               & 5.34                                & 0.582           \\ \bottomrule
\end{tabular}
\end{table}

\section{Metrics for Evaluating Sequence-Structure Co-Design}
\label{app:metrics}
\subsection{Why not Amino Acid Recovery (AAR)}
\label{app:why_not_aar}

\begin{figure}[htbp]
    \centering
    \vskip -0.1in
    \includegraphics[width=\textwidth]{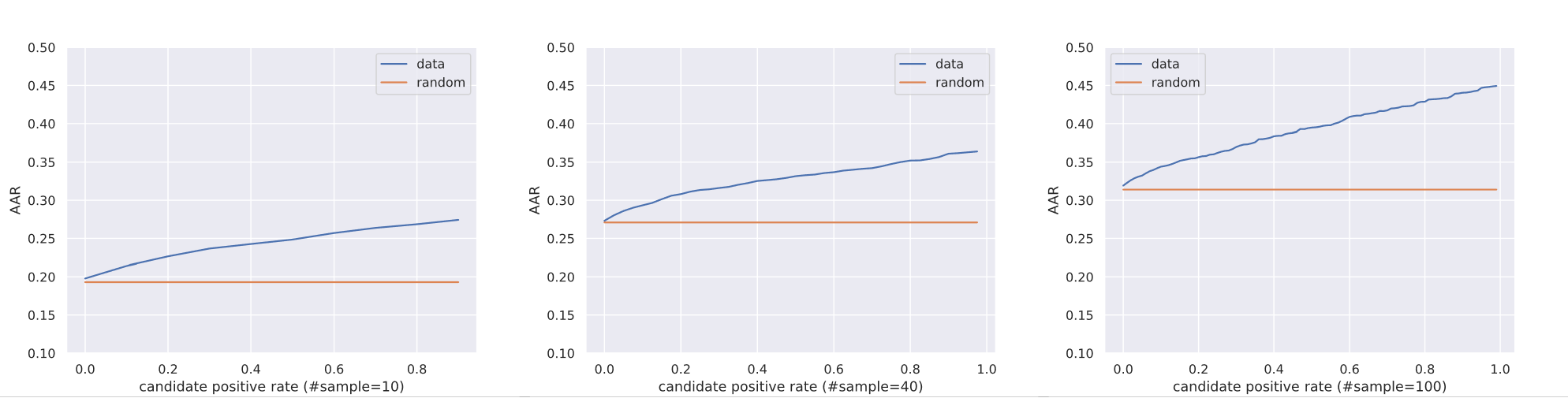}
    \caption{Best amino acid recovery (AAR) on samples constructed according to specified positive ratios. Each point contains the averaged result across 102 receptors.}
    \label{fig:aar}
\end{figure}

While the amino acid recovery (AAR) is widely used in antibody design~\citep{kong2022conditional, kong2023end, luo2022antigen}, we find it not informative enough for evaluating generative models on learning the distribution of binding peptides, due to the vast and highly diverse solution space. To elucidate this, we conducted an analysis of AAR using a sequence dataset from~\citet{xia2023computational} including 328 receptors and 600k peptides with binary binding labels, from which we filter out 102 receptors with well-explored solution space (\emph{i.e.} with at least 500 binders). For each receptor, we randomly select one binder as the reference and sample $N$ candidates according to a specified positive ratio $r$, resembling the scenario of evaluating generative models for peptide design. For example, setting $r=0.1$ and $N=100$ involves sampling 10 binders and 90 non-binders to construct the candidates. Then we compute the best AAR of these candidates with respect to the reference as the result. This process is repeated for 10 times per receptor, and the results are averaged across different receptors, which can be interpreted as the evaluation score of AAR on a generative model that can generates candidates with a positive ratio of $r$. We select $N=10, 40, 100$ and enumerate the choices of $r$ to derive the relation plot in Figure~\ref{fig:aar}, where the results of a random sequence generator is also included for comparison. It can be derived that the gap of AAR between the worst model ($r=0.0$) and the best model ($r=1.0$) is insignificant. Furthermore, all models, including the best model ($r=1.0$) which always produces positive samples, exhibit performance akin to the random sequence generator, with consistent trends regardless of the choice of $N$. We attribute this to the vast solution space of peptide design, where evaluating with dozens of candidates relative to a single reference is unreliable. In other words, \textbf{achieving a high AAR is improbable since the model would need to fortuitously explore the subspace around the reference, which is arbitrary, on every test receptor; Conversely, a low AAR does not necessarily denote a poor generative model, as the model may be exploring distinct solutions from the single reference.}
Moreover, we calculate the Spearman correlation between the receptor-level best AAR and the positive ratio $r$ of the candidates, yielding $0.23$, $0.22$ and $0.24$ for $N=10$, $40$, and $100$, respectively, indicating very weak correlation. Based on this analysis, we assert that AAR is unsuitable for evaluating target-specific peptide design. The comparison of AAR and success rates on PepBench (Table~\ref{tab:aar}) also indicates that models with similar AAR can have distinct success rates.

\begin{table}[htbp]
\centering
\caption{Amino acid recovery and success rates of different models on PepBench.}
\label{tab:aar}
\begin{tabular}{ccc}
\toprule
Model   & AAR    & Success ($\Delta G < 0$) \\ \midrule
HSRN    & 35.8\% & 10.46\%                  \\
DiffAb  & 37.1\% & 49.87\%                  \\
PepGLAD & 36.7\% & 55.97\%                  \\ \bottomrule
\end{tabular}
\end{table}

\subsection{"Unsupervised" Consistency}
\label{app:consistency_discuss}

While it is common to evaluate consistency by comparing generated structures with AF2-predicted structures~\citep{bose2023se}, such a "supervision-based" method suffers from severe limitations in the situation where AF2~\citep{jumper2021highly} fails to achieve an acceptable performance. As highlighted not only in~\citet{tsaban2022harnessing} but also demonstrated by our experiments, even state-of-the-art models like AF2 struggles to consistently produce high-quality structures of protein-peptide complexes. Our findings reveal that only 36\% of test samples could be accurately predicted within a 5\AA~RMSD (considered as near-native conformation) by AF2, indicating even the ground truth can only achieve 36\% success rates if we use such supervison-based consistency for evaluation, making such evaluation not reliable in the domain of peptide design.

In light of this limitation, we propose an "unspervised" evaluation framework to assess consistency, that is, the statistical association between the clustering results of sequences and structures. Theoretically, this serves the necessary condition for true consistency. Namely, if a model truly captures the consistency between sequence and structure, it will necessarily achieve a high score on the proposed metric. Conversely, if a model fails to attain a high score on the proposed metric, it is not possible to capture true consistency.

In our experiments, we found that our proposed consistency metric effectively distinguishes non-consistent modeling methods, such as HSRN, which tend to produce disparate sequences while sharing identical structures.

\subsection{Implememtation and Elaboration on Metrics}
Indeed, evaluating generative models comprehensively is crucial, requiring assessment from multiple perspectives. Generally, these evaluations can be categorized into two main aspects: diversity and fidelity to the desired distribution. Regarding diversity, we take inspiration from~\citet{yim2023se} and quantify it with the number of unique clusters relative to the number of candidates. This metric provides insight into the variety and richness of the generated samples. For fidelity, the primary focus should be the binding affinity, for which we adopt the physical energy from Rosetta~\citep{alford2017rosetta} since it is widely used in various domains and exhibit robust generalization capability~\citep{adolf2018rosettaantibodydesign, leman2020macromolecular, watson2023novo}. Further, considering the dependency between sequence and structure, we propose the consistency metric, which is critical for distinguishing whether the generative model is capturing the 1D\&3D joint distribution and thereby truly facilitating "co-design". We have also discussed the reason for implementing the consistency metric with the "unsupervised" fashion in the above section. Below, we outline the implementation of these metrics.

\paragraph{Diversity} We hierarchically cluster the sequences and the structures with aligning score~\citep{lei2021deep} and RMSD of $\texttt{C}_\alpha$, respectively. In particular, we implement the aligning score using Biopython~\citep{cock2009biopython} with BLOSUM62 matrix~\citep{henikoff1992amino} and Needleman-Wunsch algorithm~\citep{needleman1970general}. The thresholds for clustering is similarity above 0.6 and RMSD below 4.0 for sequence and structure, respectively. We provide the python codes below for clearer presentation:
\begin{lstlisting}[language=Python]
from math import sqrt
from typing import List
from Bio.Align import substitution_matrices, PairwiseAligner

import numpy as np


def align_score(sequence_A, sequence_B):
    # load matrix
    sub_matrice = substitution_matrices.load('BLOSUM62')
    aligner = PairwiseAligner()
    aligner.substitution_matrix = sub_matrice
    # align
    alns = aligner.align(sequence_A, sequence_B)
    best_aln = alns[0]
    aligned_A, aligned_B = best_aln
    # normalize
    base_A = aligner.score(sequence_A, sequence_A)
    base_B = aligner.score(sequence_B, sequence_B)
    base = sqrt(base_A * base_B)
    similarity = aligner.score(sequence_A, sequence_B) / base
    return similarity


def seq_diversity(seqs: List[str], th: float=0.4) -> float:
    '''
        th: sequence distance (1 - similarity)
    '''
    dists = []
    for i, sequence_A in enumerate(seqs):
        dists.append([])
        for j, sequence_B in enumerate(seqs):
            dists[i].append(1 - align_score(sequence_A, sequence_B))
    dists = np.array(dists)
    Z = linkage(squareform(dists), 'single')
    cluster = fcluster(Z, t=th, criterion='distance')
    return len(np.unique(cluster)) / len(seqs)


def struct_diversity(structs: np.ndarray, th: float=4.0) -> float:
    '''
        structs: N*L*3, alpha carbon coordinates
        th: threshold for clustering (distance < th)
    '''
    ca_dists = np.sum((structs[:, None] - structs[None, :]) ** 2, axis=-1)
    rmsd = np.sqrt(np.mean(ca_dists, axis=-1))
    Z = linkage(squareform(rmsd), 'single')
    cluster = fcluster(Z, t=th, criterion='distance')
    return len(np.unique(cluster)) / structs.shape[0]
\end{lstlisting}
Denoting the diversity of the sequences and the structures as $\text{Div}_\textit{seq}$ and $\text{Div}_\textit{struct}$, respectively, we calculate the co-design diversity as $\sqrt{\text{Div}_\textit{seq}\text{Div}_\textit{struct}}$.

\paragraph{Consistency} The clustering process in the calculation of diversity will assign each sequence and each structure with a clustering label. The labels on the sequences and those on the structures can be regarded as two nominal variables. Since similar sequences should produce similar structures, these two variables should be highly correlated if the model really learns the joint distribution. Naturally, we quantify the consistency via Cram\'er's V~\citep{cramer1999mathematical} association between these two variables, with 1.0 indicating perfect association, and 0.0 means no association.

\paragraph{$\mathbf{\Delta G}$} We use Rosetta~\citep{alford2017rosetta} to calculate the binding energy with the "ref2015" score function. Both the candidates and the references first endure the fast relax protocol in Rosetta to correct atomic clashes at the interface before the computation of binding energy. We use the implementation in pyRosetta (\emph{i.e.} the python version of Rosetta), which is borrowed from~\citet{luo2022antigen}:

\begin{lstlisting}[language=Python]
import pyrosetta
from pyrosetta.rosetta import protocols
from pyrosetta.rosetta.protocols.relax import FastRelax
from pyrosetta.rosetta.core.pack.task import TaskFactory
from pyrosetta.rosetta.core.pack.task import operation
from pyrosetta.rosetta.core.select import residue_selector as selections
from pyrosetta.rosetta.core.select.movemap import MoveMapFactory, move_map_action
from pyrosetta.rosetta.core.scoring import ScoreType
from pyrosetta.rosetta.protocols.analysis import InterfaceAnalyzerMover

pyrosetta.init(' '.join([
    '-mute', 'all',
    '-use_input_sc',
    '-ignore_unrecognized_res',
    '-ignore_zero_occupancy', 'false',
    '-load_PDB_components', 'false',
    '-relax:default_repeats', '2',
    '-no_fconfig',
    '-use_terminal_residues', 'true',
    '-in:file:silent_struct_type', 'binary'
]))


class RelaxRegion(object):  # Fast Relax
    
    def __init__(self, max_iter=1000):
        super().__init__()
        self.scorefxn = pyrosetta.create_score_function('ref2015')
        self.fast_relax = FastRelax()
        self.fast_relax.set_scorefxn(self.scorefxn)
        self.fast_relax.max_iter(max_iter)

    def __call__(self, pdb_path, peptide_chain):
        pose = pyrosetta.pose_from_pdb(pdb_path)

        tf = TaskFactory()
        tf.push_back(operation.InitializeFromCommandline())
        tf.push_back(operation.RestrictToRepacking())   # Only allow residues to repack. No design at any position.

        gen_selector = selections.ChainSelector(chain)
        nbr_selector = selections.NeighborhoodResidueSelector()
        nbr_selector.set_focus_selector(gen_selector)
        nbr_selector.set_include_focus_in_subset(True)
        subset_selector = nbr_selector

        prevent_repacking_rlt = operation.PreventRepackingRLT()
        prevent_subset_repacking = operation.OperateOnResidueSubset(
            prevent_repacking_rlt, 
            subset_selector,
            flip_subset=True,
        )
        tf.push_back(prevent_subset_repacking)

        fr = self.fast_relax
        pose = original_pose.clone()

        mmf = MoveMapFactory()
        mmf.add_bb_action(move_map_action.mm_enable, gen_selector)
        mmf.add_chi_action(move_map_action.mm_enable, subset_selector)
        mm  = mmf.create_movemap_from_pose(pose)

        fr.set_movemap(mm)
        fr.set_task_factory(tf)
        fr.apply(pose)

        return pose


def pyrosetta_interface_energy(pdb_path, receptor_chains, ligand_chains): # binding energy
    pose = pyrosetta.pose_from_pdb(pdb_path)
    interface = ''.join(ligand_chains) + '_' + ''.join(receptor_chains)
    mover = InterfaceAnalyzerMover(interface)
    mover.set_pack_separated(True)
    mover.apply(pose)
    return pose.scores['dG_separated']
\end{lstlisting}

\paragraph{Success} Since a negative $\Delta G$ value typically indicates a potential for binding, we report the ratio of all generated candidates that satisfy this threshold. Moreover, candidates with $\Delta G < 0$ usually are at least physically valid (\emph{i.e.} without obvious atomic clash), thus it is meaningful to use this success rate to evaluation the generative ability of the models.

\section{Experiment Details}
\label{app:baselines}

\subsection{Implementation of \model}

We train~\model~on a GPU with 24G memory with AdamW optimizer. For the autoencoder, we train for 60 epochs with dynamic batches, ensuring that the total number of edges (proportional to the square of the number of nodes) remains below 60,000. The initial learning rate is $10^{-4}$ and decays by 0.8 if the loss on the validation set has not decreased for 3 consecutive epochs. Regarding the diffusion model, we train for 500 epochs with the same batching strategy as for the autoencoder. The learning rate is $10^{-4}$ and decay by 0.6 if the loss has not decreased for 3 consecutive validations, where the validation is conducted every 10 epochs. In the experiment of binding conformation generation, we only use the supervised dataset, thus we extend the training epochs for the autoencoder and the diffusion model to 500 and 1000, respectively. Consequently, the patience of learning rate decay is extended to 15 epochs for training the autoencoder. We keep other settings unchanged. The hyperparameters of~\model~used in our experiments are provided in Table~\ref{tab:param}.

\begin{table}[htbp]
\centering
\caption{Hyperparameters of~\model~in sequence-structure codesign and binding conformation generation.}
\label{tab:param}
\begin{tabular}{crrl}
\toprule
\multirow{2}{*}{Name}    & \multicolumn{2}{c}{Value}                                       & \multirow{2}{*}{Description}                                        \\ \cline{2-3}
                         & \multicolumn{1}{c}{codesign} & \multicolumn{1}{c}{conformation} &                                                                     \\ \hline
\rowcolor{black!5}\multicolumn{4}{c}{Variational AutoEncoder}                                                                                                    \\ \hline
embed size               & 128                          & 128                              & Size of embeddings for residue types.                               \\
hidden size              & 128                          & 128                              & Size of hidden layers.                                              \\
$h$                      & 8                            & -                                & Size of the latent variable for residue types in the sequences.     \\
layers                   & 3                            & 3                                & Number of layers.                                                   \\
$\lambda_1$              & 0.1                          & 0.0                              & The weight of KL divergence on the sequence.                        \\
$\lambda_2$              & 0.5                          & 0.5                              & The weight of KL divergence on the structure.                       \\
$\lambda_\textit{CA}$    & 1.0                          & 1.0                              & The weight of $\texttt{C}_\alpha$ loss in $\gL_\textit{aux}$.       \\
$\lambda_\textit{bond}$  & 1.0                          & 1.0                              & The weight of bond loss in $\gL_\textit{aux}$.                      \\
$\lambda_\textit{angle}$ & 0.5                          & 0.5                              & The weight of side-chain dihedral angle loss in $\gL_\textit{aux}$. \\ \hline
\rowcolor{black!5}\multicolumn{4}{c}{Latent Diffusion Model}                                                                                                     \\ \hline
hidden size              & 128                          & 128                              & Size of hidden layers in the denoising network.                     \\
layers                   & 3                            & 3                                & Number of layers.                                                   \\
steps                    & 100                          & 100                              & Number of diffusion steps.                                          \\ \bottomrule
\end{tabular}
\end{table}

\subsection{Implementation of the Baselines}
For \textbf{HSRN}~\citep{jin2022antibody}, \textbf{dyMEAN}~\citep{kong2023end}, and \textbf{DiffAb}~\citep{luo2022antigen}, we directly integrate their official implementation into the same training framework as our~\model, and adjust the batch size, learning rate, and training epochs to obtain the optimal performance, which we present in Table~\ref{tab:baseline_training}.

\begin{table}[htbp]
\centering
\caption{Hyperparamenters for training the baselines.}
\label{tab:baseline_training}
\begin{tabular}{cccc}
\toprule
Model  & Batch Size & Learning Rate & Epoch \\ \midrule
HSRN   & 8          & 1.0e-4        & 50    \\
dyMEAN & 32         & 1.0e-4        & 100   \\
DiffAb & 32         & 1.0e-4        & 50    \\
\bottomrule
\end{tabular}
\end{table}

We outline the implementation of other baselines below:

\textbf{RFDiffusion}~\citep{watson2023novo}~We follow the official instruction to randomly select 20\% of residues on the binding site as "hotspots", and generate the backbone via diffusion followed by cycles of inverse folding with ProteinMPNN~\citep{dauparas2022robust} and full-atom structural refining with the officially provided Rosetta protocol.

\textbf{AnchorExtension}~\citep{hosseinzadeh2021anchor}~It is implemented with the Rosetta suite, thus we resort to the official release of the pipeline protocols for docking and optimizing. We use the default parameters and generate 10 candidates for each receptor due to its limitation of efficiency. We randomly pick one peptide in the training set with the same length as the reference peptide as the initial motif for docking.

\textbf{FlexPepDock}~\citep{london2011rosetta}~We follow its official tutorial to implement this baseline in the C++ version of Rosetta.

\textbf{AlphaFold2}~\citep{jumper2021highly}~We borrow the results from~\citep{tsaban2022harnessing}, which explores two strategy to use AlphaFold2 on peptide conformation generation, including modeling the receptor and the peptide as separate chains or link them together with long loops. The results contain a total of 10 candidates for each receptor, with 5 from the separate strategy and 5 from the linked strategy.

\newpage
\section*{NeurIPS Paper Checklist}

\begin{enumerate}

\item {\bf Claims}
    \item[] Question: Do the main claims made in the abstract and introduction accurately reflect the paper's contributions and scope?
    \item[] Answer: \answerYes{} 
    \item[] Justification: \textsection~\ref{sec:method} and \textsection~\ref{sec:exp} illustrate the claims well from the methodology and the empirical aspects.
    \item[] Guidelines:
    \begin{itemize}
        \item The answer NA means that the abstract and introduction do not include the claims made in the paper.
        \item The abstract and/or introduction should clearly state the claims made, including the contributions made in the paper and important assumptions and limitations. A No or NA answer to this question will not be perceived well by the reviewers. 
        \item The claims made should match theoretical and experimental results, and reflect how much the results can be expected to generalize to other settings. 
        \item It is fine to include aspirational goals as motivation as long as it is clear that these goals are not attained by the paper. 
    \end{itemize}

\item {\bf Limitations}
    \item[] Question: Does the paper discuss the limitations of the work performed by the authors?
    \item[] Answer: \answerYes{} 
    \item[] Justification: We have an independent section (\textsection~\ref{sec:limitation}) to discuss limitations.
    \item[] Guidelines:
    \begin{itemize}
        \item The answer NA means that the paper has no limitation while the answer No means that the paper has limitations, but those are not discussed in the paper. 
        \item The authors are encouraged to create a separate "Limitations" section in their paper.
        \item The paper should point out any strong assumptions and how robust the results are to violations of these assumptions (e.g., independence assumptions, noiseless settings, model well-specification, asymptotic approximations only holding locally). The authors should reflect on how these assumptions might be violated in practice and what the implications would be.
        \item The authors should reflect on the scope of the claims made, e.g., if the approach was only tested on a few datasets or with a few runs. In general, empirical results often depend on implicit assumptions, which should be articulated.
        \item The authors should reflect on the factors that influence the performance of the approach. For example, a facial recognition algorithm may perform poorly when image resolution is low or images are taken in low lighting. Or a speech-to-text system might not be used reliably to provide closed captions for online lectures because it fails to handle technical jargon.
        \item The authors should discuss the computational efficiency of the proposed algorithms and how they scale with dataset size.
        \item If applicable, the authors should discuss possible limitations of their approach to address problems of privacy and fairness.
        \item While the authors might fear that complete honesty about limitations might be used by reviewers as grounds for rejection, a worse outcome might be that reviewers discover limitations that aren't acknowledged in the paper. The authors should use their best judgment and recognize that individual actions in favor of transparency play an important role in developing norms that preserve the integrity of the community. Reviewers will be specifically instructed to not penalize honesty concerning limitations.
    \end{itemize}

\item {\bf Theory Assumptions and Proofs}
    \item[] Question: For each theoretical result, does the paper provide the full set of assumptions and a complete (and correct) proof?
    \item[] Answer: \answerYes{} 
    \item[] Justification: See Appendix~\ref{app:proof}.
    \item[] Guidelines:
    \begin{itemize}
        \item The answer NA means that the paper does not include theoretical results. 
        \item All the theorems, formulas, and proofs in the paper should be numbered and cross-referenced.
        \item All assumptions should be clearly stated or referenced in the statement of any theorems.
        \item The proofs can either appear in the main paper or the supplemental material, but if they appear in the supplemental material, the authors are encouraged to provide a short proof sketch to provide intuition. 
        \item Inversely, any informal proof provided in the core of the paper should be complemented by formal proofs provided in appendix or supplemental material.
        \item Theorems and Lemmas that the proof relies upon should be properly referenced. 
    \end{itemize}

    \item {\bf Experimental Result Reproducibility}
    \item[] Question: Does the paper fully disclose all the information needed to reproduce the main experimental results of the paper to the extent that it affects the main claims and/or conclusions of the paper (regardless of whether the code and data are provided or not)?
    \item[] Answer: \answerYes{} 
    \item[] Justification: In addition to the methodology section (\textsection~\ref{sec:method}), we provide implementation details in Appendix~\ref{app:baselines}.
    \item[] Guidelines:
    \begin{itemize}
        \item The answer NA means that the paper does not include experiments.
        \item If the paper includes experiments, a No answer to this question will not be perceived well by the reviewers: Making the paper reproducible is important, regardless of whether the code and data are provided or not.
        \item If the contribution is a dataset and/or model, the authors should describe the steps taken to make their results reproducible or verifiable. 
        \item Depending on the contribution, reproducibility can be accomplished in various ways. For example, if the contribution is a novel architecture, describing the architecture fully might suffice, or if the contribution is a specific model and empirical evaluation, it may be necessary to either make it possible for others to replicate the model with the same dataset, or provide access to the model. In general. releasing code and data is often one good way to accomplish this, but reproducibility can also be provided via detailed instructions for how to replicate the results, access to a hosted model (e.g., in the case of a large language model), releasing of a model checkpoint, or other means that are appropriate to the research performed.
        \item While NeurIPS does not require releasing code, the conference does require all submissions to provide some reasonable avenue for reproducibility, which may depend on the nature of the contribution. For example
        \begin{enumerate}
            \item If the contribution is primarily a new algorithm, the paper should make it clear how to reproduce that algorithm.
            \item If the contribution is primarily a new model architecture, the paper should describe the architecture clearly and fully.
            \item If the contribution is a new model (e.g., a large language model), then there should either be a way to access this model for reproducing the results or a way to reproduce the model (e.g., with an open-source dataset or instructions for how to construct the dataset).
            \item We recognize that reproducibility may be tricky in some cases, in which case authors are welcome to describe the particular way they provide for reproducibility. In the case of closed-source models, it may be that access to the model is limited in some way (e.g., to registered users), but it should be possible for other researchers to have some path to reproducing or verifying the results.
        \end{enumerate}
    \end{itemize}

\item {\bf Open access to data and code}
    \item[] Question: Does the paper provide open access to the data and code, with sufficient instructions to faithfully reproduce the main experimental results, as described in supplemental material?
    \item[] Answer: \answerYes{} 
    \item[] Justification: We provide codes and data for our model and the experiments.
    \item[] Guidelines:
    \begin{itemize}
        \item The answer NA means that paper does not include experiments requiring code.
        \item Please see the NeurIPS code and data submission guidelines (\url{https://nips.cc/public/guides/CodeSubmissionPolicy}) for more details.
        \item While we encourage the release of code and data, we understand that this might not be possible, so “No” is an acceptable answer. Papers cannot be rejected simply for not including code, unless this is central to the contribution (e.g., for a new open-source benchmark).
        \item The instructions should contain the exact command and environment needed to run to reproduce the results. See the NeurIPS code and data submission guidelines (\url{https://nips.cc/public/guides/CodeSubmissionPolicy}) for more details.
        \item The authors should provide instructions on data access and preparation, including how to access the raw data, preprocessed data, intermediate data, and generated data, etc.
        \item The authors should provide scripts to reproduce all experimental results for the new proposed method and baselines. If only a subset of experiments are reproducible, they should state which ones are omitted from the script and why.
        \item At submission time, to preserve anonymity, the authors should release anonymized versions (if applicable).
        \item Providing as much information as possible in supplemental material (appended to the paper) is recommended, but including URLs to data and code is permitted.
    \end{itemize}

\item {\bf Experimental Setting/Details}
    \item[] Question: Does the paper specify all the training and test details (e.g., data splits, hyperparameters, how they were chosen, type of optimizer, etc.) necessary to understand the results?
    \item[] Answer: \answerYes{} 
    \item[] Justification: See \textsection~\ref{sec:setup}, Appendix~\ref{app:data} and Appendix~\ref{app:baselines}.
    \item[] Guidelines:
    \begin{itemize}
        \item The answer NA means that the paper does not include experiments.
        \item The experimental setting should be presented in the core of the paper to a level of detail that is necessary to appreciate the results and make sense of them.
        \item The full details can be provided either with the code, in appendix, or as supplemental material.
    \end{itemize}

\item {\bf Experiment Statistical Significance}
    \item[] Question: Does the paper report error bars suitably and correctly defined or other appropriate information about the statistical significance of the experiments?
    \item[] Answer: \answerNo{} 
    \item[] Justification: The size of the dataset is large, and we report results on two different datasets for more solid evaluation.
    \item[] Guidelines:
    \begin{itemize}
        \item The answer NA means that the paper does not include experiments.
        \item The authors should answer "Yes" if the results are accompanied by error bars, confidence intervals, or statistical significance tests, at least for the experiments that support the main claims of the paper.
        \item The factors of variability that the error bars are capturing should be clearly stated (for example, train/test split, initialization, random drawing of some parameter, or overall run with given experimental conditions).
        \item The method for calculating the error bars should be explained (closed form formula, call to a library function, bootstrap, etc.)
        \item The assumptions made should be given (e.g., Normally distributed errors).
        \item It should be clear whether the error bar is the standard deviation or the standard error of the mean.
        \item It is OK to report 1-sigma error bars, but one should state it. The authors should preferably report a 2-sigma error bar than state that they have a 96\% CI, if the hypothesis of Normality of errors is not verified.
        \item For asymmetric distributions, the authors should be careful not to show in tables or figures symmetric error bars that would yield results that are out of range (e.g. negative error rates).
        \item If error bars are reported in tables or plots, The authors should explain in the text how they were calculated and reference the corresponding figures or tables in the text.
    \end{itemize}

\item {\bf Experiments Compute Resources}
    \item[] Question: For each experiment, does the paper provide sufficient information on the computer resources (type of compute workers, memory, time of execution) needed to reproduce the experiments?
    \item[] Answer: \answerYes{} 
    \item[] Justification: See Appendix~\ref{app:baselines}.
    \item[] Guidelines:
    \begin{itemize}
        \item The answer NA means that the paper does not include experiments.
        \item The paper should indicate the type of compute workers CPU or GPU, internal cluster, or cloud provider, including relevant memory and storage.
        \item The paper should provide the amount of compute required for each of the individual experimental runs as well as estimate the total compute. 
        \item The paper should disclose whether the full research project required more compute than the experiments reported in the paper (e.g., preliminary or failed experiments that didn't make it into the paper). 
    \end{itemize}
    
\item {\bf Code Of Ethics}
    \item[] Question: Does the research conducted in the paper conform, in every respect, with the NeurIPS Code of Ethics \url{https://neurips.cc/public/EthicsGuidelines}?
    \item[] Answer: \answerYes{} 
    \item[] Justification: We ensure the research to conform to the NeurIPS Code of Ethics.
    \item[] Guidelines:
    \begin{itemize}
        \item The answer NA means that the authors have not reviewed the NeurIPS Code of Ethics.
        \item If the authors answer No, they should explain the special circumstances that require a deviation from the Code of Ethics.
        \item The authors should make sure to preserve anonymity (e.g., if there is a special consideration due to laws or regulations in their jurisdiction).
    \end{itemize}

\item {\bf Broader Impacts}
    \item[] Question: Does the paper discuss both potential positive societal impacts and negative societal impacts of the work performed?
    \item[] Answer: \answerYes{} 
    \item[] Justification: See \textsection~\ref{sec:impact}.
    \item[] Guidelines:
    \begin{itemize}
        \item The answer NA means that there is no societal impact of the work performed.
        \item If the authors answer NA or No, they should explain why their work has no societal impact or why the paper does not address societal impact.
        \item Examples of negative societal impacts include potential malicious or unintended uses (e.g., disinformation, generating fake profiles, surveillance), fairness considerations (e.g., deployment of technologies that could make decisions that unfairly impact specific groups), privacy considerations, and security considerations.
        \item The conference expects that many papers will be foundational research and not tied to particular applications, let alone deployments. However, if there is a direct path to any negative applications, the authors should point it out. For example, it is legitimate to point out that an improvement in the quality of generative models could be used to generate deepfakes for disinformation. On the other hand, it is not needed to point out that a generic algorithm for optimizing neural networks could enable people to train models that generate Deepfakes faster.
        \item The authors should consider possible harms that could arise when the technology is being used as intended and functioning correctly, harms that could arise when the technology is being used as intended but gives incorrect results, and harms following from (intentional or unintentional) misuse of the technology.
        \item If there are negative societal impacts, the authors could also discuss possible mitigation strategies (e.g., gated release of models, providing defenses in addition to attacks, mechanisms for monitoring misuse, mechanisms to monitor how a system learns from feedback over time, improving the efficiency and accessibility of ML).
    \end{itemize}
    
\item {\bf Safeguards}
    \item[] Question: Does the paper describe safeguards that have been put in place for responsible release of data or models that have a high risk for misuse (e.g., pretrained language models, image generators, or scraped datasets)?
    \item[] Answer: \answerNA{} 
    \item[] Justification: The paper does not pose such risks.
    \item[] Guidelines:
    \begin{itemize}
        \item The answer NA means that the paper poses no such risks.
        \item Released models that have a high risk for misuse or dual-use should be released with necessary safeguards to allow for controlled use of the model, for example by requiring that users adhere to usage guidelines or restrictions to access the model or implementing safety filters. 
        \item Datasets that have been scraped from the Internet could pose safety risks. The authors should describe how they avoided releasing unsafe images.
        \item We recognize that providing effective safeguards is challenging, and many papers do not require this, but we encourage authors to take this into account and make a best faith effort.
    \end{itemize}

\item {\bf Licenses for existing assets}
    \item[] Question: Are the creators or original owners of assets (e.g., code, data, models), used in the paper, properly credited and are the license and terms of use explicitly mentioned and properly respected?
    \item[] Answer: \answerYes{} 
    \item[] Justification: They are properly credited.
    \item[] Guidelines:
    \begin{itemize}
        \item The answer NA means that the paper does not use existing assets.
        \item The authors should cite the original paper that produced the code package or dataset.
        \item The authors should state which version of the asset is used and, if possible, include a URL.
        \item The name of the license (e.g., CC-BY 4.0) should be included for each asset.
        \item For scraped data from a particular source (e.g., website), the copyright and terms of service of that source should be provided.
        \item If assets are released, the license, copyright information, and terms of use in the package should be provided. For popular datasets, \url{paperswithcode.com/datasets} has curated licenses for some datasets. Their licensing guide can help determine the license of a dataset.
        \item For existing datasets that are re-packaged, both the original license and the license of the derived asset (if it has changed) should be provided.
        \item If this information is not available online, the authors are encouraged to reach out to the asset's creators.
    \end{itemize}

\item {\bf New Assets}
    \item[] Question: Are new assets introduced in the paper well documented and is the documentation provided alongside the assets?
    \item[] Answer: \answerYes{} 
    \item[] Justification: The README.md in the provided codes illustrate how to construct the benchmark from publicly available resources.
    \item[] Guidelines:
    \begin{itemize}
        \item The answer NA means that the paper does not release new assets.
        \item Researchers should communicate the details of the dataset/code/model as part of their submissions via structured templates. This includes details about training, license, limitations, etc. 
        \item The paper should discuss whether and how consent was obtained from people whose asset is used.
        \item At submission time, remember to anonymize your assets (if applicable). You can either create an anonymized URL or include an anonymized zip file.
    \end{itemize}

\item {\bf Crowdsourcing and Research with Human Subjects}
    \item[] Question: For crowdsourcing experiments and research with human subjects, does the paper include the full text of instructions given to participants and screenshots, if applicable, as well as details about compensation (if any)? 
    \item[] Answer: \answerNA{} 
    \item[] Justification: This paper does not involve crowdsourcing.
    \item[] Guidelines:
    \begin{itemize}
        \item The answer NA means that the paper does not involve crowdsourcing nor research with human subjects.
        \item Including this information in the supplemental material is fine, but if the main contribution of the paper involves human subjects, then as much detail as possible should be included in the main paper. 
        \item According to the NeurIPS Code of Ethics, workers involved in data collection, curation, or other labor should be paid at least the minimum wage in the country of the data collector. 
    \end{itemize}

\item {\bf Institutional Review Board (IRB) Approvals or Equivalent for Research with Human Subjects}
    \item[] Question: Does the paper describe potential risks incurred by study participants, whether such risks were disclosed to the subjects, and whether Institutional Review Board (IRB) approvals (or an equivalent approval/review based on the requirements of your country or institution) were obtained?
    \item[] Answer: \answerNA{} 
    \item[] Justification: This paper does not involve such a topic.
    \item[] Guidelines:
    \begin{itemize}
        \item The answer NA means that the paper does not involve crowdsourcing nor research with human subjects.
        \item Depending on the country in which research is conducted, IRB approval (or equivalent) may be required for any human subjects research. If you obtained IRB approval, you should clearly state this in the paper. 
        \item We recognize that the procedures for this may vary significantly between institutions and locations, and we expect authors to adhere to the NeurIPS Code of Ethics and the guidelines for their institution. 
        \item For initial submissions, do not include any information that would break anonymity (if applicable), such as the institution conducting the review.
    \end{itemize}

\end{enumerate}

\end{document}